%% file: IEEE_TASE.tex
\documentclass[lettersize,journal]{IEEEtran}
\usepackage[numbers]{natbib}
\usepackage[hidelinks]{hyperref}

\usepackage{amsmath,amsfonts}
\usepackage{algorithmic}
\usepackage{algorithm}
\usepackage{array}
\usepackage[caption=false,font=normalsize,labelfont=sf,textfont=sf]{subfig}
\usepackage{textcomp}
\usepackage{stfloats}
\usepackage{url}
\usepackage{verbatim}
\usepackage{graphicx}

\usepackage{amsmath}
\usepackage{amsthm}
\usepackage{amssymb}

\usepackage{xr}
\makeatletter
\newcommand*{\addFileDependency}[1]{
\typeout{(#1)}
\@addtofilelist{#1}
\IfFileExists{#1}{}{\typeout{No file #1.}}
}\makeatother
\newcommand*{\myexternaldocument}[1]{%
\externaldocument{#1}%
\addFileDependency{#1.tex}%
\addFileDependency{#1.aux}%
}
\myexternaldocument{Appendix}

\newcommand{\probP}{\text{I\kern-0.15em P}}
\newcommand{\problem}{\text{PkBP }}
\newcommand{\problemfull}{\text{Probabilistic k-Bins Packing }}

\newtheorem{lemma}{Lemma}
\newtheorem{theorem}{Theorem}
\newtheorem{definition}{Definition}
\newtheorem{properties}{Properties}

\makeatletter
\newenvironment{procedure}[1][tb]{
    \renewcommand{\ALG@name}{Procedure}
   \begin{algorithm}[#1]
  }{\end{algorithm}}
\makeatother
\allowdisplaybreaks

\begin{document}

\title{Hotspot-Aware Scheduling of Virtual Machines with Overcommitment for Ultimate Utilization in Cloud Datacenters}

\author{Jiaxi Wu, Pavel Popov, Wenquan Yang, Andrei Gudkov, Elizaveta Ponomareva, Xinming Han, Yunzhe Qiu, Jie Song, Stepan Romanov 
\IEEEcompsocitemizethanks{
\IEEEcompsocthanksitem 
Xinming Han, Yunzhe Qiu, and Jie Song are from Peking university. \protect\\
Jiaxi Wu, Pavel Popov, Wenquan Yang, and Stepan Romanov are from Huawei Technologies Company Ltd  \protect\\
E-mail: stepan.romanov@skolkovotech.ru
}}

\makeatletter
\def\ps@IEEEtitlepagestyle{%
  \def\@oddfoot{\mycopyrightnotice}%
  \def\@oddhead{\hbox{}\@IEEEheaderstyle\leftmark\hfil\thepage}\relax
  \def\@evenhead{\@IEEEheaderstyle\thepage\hfil\leftmark\hbox{}}\relax
  \def\@evenfoot{}%
}
\def\mycopyrightnotice{%
  \begin{minipage}{\textwidth}
  \centering \scriptsize
  Copyright~\copyright~2024 IEEE. Personal use of this material is permitted. Permission from IEEE must be obtained for all other uses, in any current or future media, including\\reprinting/republishing this material for advertising or promotional purposes, creating new collective works, for resale or redistribution to servers or lists, or reuse of any copyrighted component of this work in other works by sending a request to pubs-permissions@ieee.org.
  \end{minipage}
}
\makeatother

\markboth{IEEE TRANSACTIONS ON AUTOMATION SCIENCE AND ENGINEERING}{}

\maketitle

\begin{abstract}
We address the problem of under-utilization of resources in datacenters during cloud operations, specifically focusing on the challenge of online virtual machine (VM) scheduling. Rather than following the traditional approach of scheduling VMs based solely on their static flavors, we take into account their dynamic CPU utilization. We employ $\Gamma$-robustness theory to manage the dynamic nature and introduce a novel variant of bin packing - \problemfull (\problem\!\!), which theoretically protects the Physical Machines (PMs) from hotspots formation within a specified probability $\alpha$. We develop a scheduling algroithm named CloseRadiusFit and cold-start AI-based prediction algorithms for the online version of \problem\!\!. To verify the quality of our approach towards the optimal solutions, we solve the Offline \problem problem by designing a novel Mixed Integer Linear Programming (MILP) model and a combination of numerical upper and lower bounds. Our experimental results demonstrate that CloseRadiusFit achieves narrow gaps of 1.6\% and 3.1\% when compared to the lower and upper bounds, respectively.

\end{abstract}

\def\abstractname{Note to Practitioners}
\begin{abstract}

A growing trend in the cloud industry involves overcommitting VMs on PMs. While this approach can ease the problem of low utilization of resources in datacenters, it also introduces a higher risk of hotspots due to resource contention and competition among VMs. In this work, we propose a novel method that leverages $\Gamma$-robustness theory and introduce effective heuristics to achieve ultimate utilization of datacenter resources while ensuring desirable service quality. We validate our approach using real-world production data from Huawei Cloud, improving resource utilization by 125\% over traditional flavor-based allocation methods, while maintaining the occurrence of hotspots below 5\% ($\alpha=0.05$). Our solution only requires VMs' real utilization data that is typically already collected in cloud providers' production environments. Therefore, with minimal modifications to the existing scheduling system, cloud providers can easily implement our solution and reap its benefits. Moreover, in cases of the absence of historical utilization data for VMs (cold-start), we use machine learning to predict VM utilization statistics for our approach.

\end{abstract}

\begin{IEEEkeywords}
Virtual Machine Scheduling, Cloud Computing, Bin Packing, Stochastic Dominance, Robustness Optimization, Chance Constraints
\end{IEEEkeywords}

\section{Introduction}\label{sec:introduction}

Cloud service providers offer Virtual Machines (VMs) as online computing product to users over the Internet. The main resource for VM leasing is the central processing unit (CPU). Users purchase a VM with a fixed amount of CPU cores, which is determined by the VM flavor - for instance, 1 core, 2 cores, 4 cores, etc.
The placement of these VMs onto physical machines (PMs) becomes a critical problem for cloud vendors, given the enormous number of VMs created daily.
If the sequence of VMs is known beforehand, assigning VMs to PMs is closely related to the classical bin packing problem, where VMs (items) need to be assigned to PMs (bins) based on their flavor sizes \cite{delorme2016bin, christensen2017approximation}. The problem, known for its NP-hard nature, can be efficiently solved using production systems like Protean \cite{hadary2020protean} or heuristics like BalCon \cite{gudkov2023balcon}. However, taking into account the real utilization of VMs, rather than just their flavor values, can significantly improve resource utilization.

Studies have indicated that the average utilization of cloud data centers worldwide is low, estimated to be between 15\%-20\% \cite{hsieh2020utilization}, due to users utilizing fewer CPU cores than specified in the purchased flavor. Keeping a PM with such low utilization turned on is wasteful in terms of hardware and power consumption, as an idle PM consumes around 70\% of its peak power \cite{beloglazov2012energy}. To enhance data center utilization, providers employ resource "overcommitment" -- a policy that involves allocating VMs to PMs beyond their nominal capacity. However, the overcommitment policy comes at the cost of the risk of resource contention, potentially leading to "hotspots" - situations where multiple VMs simultaneously demand more resources than are available on a PM. Such situations can lead to violation of Service Level Agreements (SLAs) and are strongly undesirable for cloud service providers.

Conventional approaches to overcommitment involve dynamically relocating VMs in the case of hotspots detection or when a host's CPU utilization approaches an empirical threshold \cite{saxena2021op, dabbagh2016energy, beloglazov2010adaptive}. Although these methods can mitigate immediate resource conflicts, they often result in frequent VM migrations, imposing significant overhead on cloud management systems \cite{kumar2019comprehensive}. Therefore, it is crucial to strike a balance between flavor-based allocation, which may result in under-utilization, and dynamic approaches, which can generate numerous hotspots. To improve the effectiveness of dynamic approaches, researchers explored predictive techniques, such as the Resource Central predictor \cite{cortez2017resource}, that utilizes percentiles of VM utilization, and N-sigma predictors \cite{bashir2021take}, which assume Gaussian distribution of utilization. However these methods struggle to provide consistent guarantees and can lead to a high probability of hotspots — up to 18\% when the required per-machine violation rate is 5\% \cite{bashir2021take}. 
Consequently, there is a pressing need for robust methods that achieve optimal resource utilization while maintaining control over hotspot occurrences.

A common approach to providing predefined levels of guarantees to the optimization problems relies on probability theory and stochastic programming with chance-constrained formulations \cite{charnes1963deterministic, nemirovski2007convex, calafiore2006probabilistic}. Even for continuous optimization problems, the inclusion of probabilistic constraints often leads to non-convex feasibility regions and NP-hardness \cite{kuccukyavuz2022chance}. Despite these challenges, stochastic bin-packing with chance constraints was explored both theoretically  \cite{zhang2018ambiguous,song2014chance} and practically, ranging from networking  \cite{kleinberg1997allocating, wang2011consolidating} and cloud computing problems \cite{cohen2019overcommitment, martinovic2021mathematical,yan2022solving} to planning of surgery rooms \cite{denton2010optimal, shylo2013stochastic,zhang2020branch}. Typically, these problems are solved by sample averaging approximation using big-M constraints, which are computationally hard and lack interpretation of bin loads for heuristic applications. Advanced scheduling techniques like reinforcement learning, despite their flexibility, suffer from the same problems of computational complexity, interpretability, and notable optimality gaps \cite{sheng2023learning, guo2024intelligent,xiong2023multi}. 

The most relevant approach to our work is the sub-modular bin-packing problem (SMBP), that transforms probabilistic constraints into deterministic capacity constraints \cite{cohen2019overcommitment}. The authors minimize the number of active hosts in a cluster using BestFit heuristic, introduce theoretical bounds, and provide guarantees on the occurrence of hotspots using sub-modular capacity constraint. These constraints are formulated for VM utilization following either Gaussian or \textit{generic} probability distributions. Building upon this foundation, subsequent research expanded the SMBP framework by including the introduction of additional bounds \cite{martinovic2021mathematical}, the development of branch-and-price solvers \cite{xu2023branch}, and the formulation of problems with modified objective functions \cite{yan2022solving}.

In this work, we propose two major modifications to previous research: 1) we use $\Gamma$-robustness theory to provide guaranties on hotspots, extend family of Gaussian distributions to encompass all symmetrical distributions, and formulate the problem with linear deterministic capacity constraint, unlike the previously used non-linear sub-modular constraints and 2) we introduce a novel \problemfull (\problem\!\!) problem, which accounts for a finite number of hosts in a cluster and uses it as an algorithm parameter. This formulation corresponds to scenarios where we aim to maximize the efficiency of cluster utilization by working with a fixed number of PMs. 

We summarize the main contributions as follows:
\begin{itemize}
    \item  
    We formulate the novel \problem problem and propose an approach towards its solution using the $\Gamma$-robustness theory \cite{BERTSIMAS}. To satisfy the requirements of the theory on the symmetry of a distribution, we propose a procedure for obtaining a stochastic dominant symmetrical random variable of a given empirical distribution. Our approach ensures an equal level of guarantees for all PMs, overcoming limitations of previous work where guarantees were uncontrollable for each PM \cite{BERTSIMAS}.
    \item We study the \problem problem under two scenarios: Hot start and Cold start. This approach significantly enhances the adaptability of our solution to various real-world cloud environments, addressing both situations with and without historical VM utilization data.
    \item To estimate the quality of heuristics, we solve the Offline \problem problem and propose a novel Mixed-Integer Linear Programming Model (MILP) that provides globally optimal solutions for small-sized problems. To limit bounds on optimal solutions for large problem instances, we propose numerical upper and lower bounds, namely PrefixUB and CloseRadiusLB, with a remarkably tight gap of 1.5\% between them.
    \item We develop an online algorithm, CloseRadiusFit, with performance close to that of the lower bound, with a gap of 1.6\%.
    \item We test our approach with real-trace data from Huawei Cloud, demonstrating substantial improvements in PM utilization efficiency with almost hotspot-free allocation. We published both the dataset\footnote{Dataset HW Cloud-2022: \url{https://github.com/huaweicloud/HUAWEICloudPublicDataset}} and the codes for algorithms and bounds\footnote{Code: \url{https://github.com/andreigudkov/CloseRadiusFit}}. 
\end{itemize}

The rest of the paper is structured as follows. Section \ref{sec:ProblemStatement} introduces the problem statement of the proposed \problem problem. In Section \ref{sec:gamma constraint}, we introduce $\Gamma$-robustness theory and propose a procedure to find a symmetric and stochastic dominant random value of a given one. Section \ref{sec:VM placement with trusted predictions} is dedicated to studying the problem under a Hot start assumption, where we propose a MILP model, upper and lower bounds, and a CloseRadiusFit heuristic. In Section \ref{sec:VM placement with untrusted predictions}, we study the problem under the Cold start scenario. In Section \ref{sec:experimental results}, we extensively test our approach with real data from Huawei Cloud and discusses the practical benefits obtained from its application. Finally, Section \ref{sec:conclusion} concludes the paper.

\section{Problem statement}\label{sec:ProblemStatement}
Researchers traditionally address the problem of VM scheduling as a variant of the bin packing problem, where the goal is to pack a finite number of items (VMs) into the fewest number of bins (hosts). However, there are core differences that set our problem of achieving maximal utilization of the datacenter apart. Unlike the classical bin-packing problem, we operate with a fixed and predetermined number of hosts. This modification is dictated by the practical limitations of cloud infrastructure, where the number of available hosts is known and finite. Additionally, rather than relying on a simple summation of the static VM flavor values (item sizes), we consider the dynamic and uncertain nature of VM resource utilization. These distinctions necessitate a novel problem formulation that better reflects the realities of cloud datacenter operations.

We formulate a new online VM scheduling problem, referred to as Probabilistic k-Bins Packing (PkBP). The VMs arrive sequentially, forming a queue $Q=\{VM_1, VM_2, ..., VM_v, ... \}$, and the scheduler places the VMs onto an available set of hosts $H$ indexed by $h \in \{1,2,..,|H|\}$ (see Figure \ref{fig:ProblemStatemente}). For each incoming VM $v$, the scheduler possesses its CPU utilization $u_{v}$ and chooses a destination host using information about previously placed VMs. Importantly, the CPU utilization of future VMs from the queue is unavailable for the scheduler. To handle the uncertainty associated with CPU utilization, we model $u_v$ as a random variable and impose a chance capacity constraint for each host, given by:

\begin{align}
\probP(\sum_{v \in V_h} u_{v} > C) \le \alpha \quad\quad \forall h \in H,
\label{eq:chance_cons}
\end{align}
where $V_h$ is the set of VMs on a host $h$, $C$ is the capacity of the host $h$, and $\alpha$ is the acceptable probability of constraint violation. The goal of the scheduler is to maximize the total number of VMs in the hosts without violating the capacity constraint of Eq.~\ref{eq:chance_cons}. Meanwhile, we forbid the scheduler to skip incoming VMs and rearrange previously placed VMs.

In the offline version of \problem problem, the scheduler accesses information regarding the CPU utilization of all VMs in the queue and searches for the longest "placeable prefix" that allows for a feasible mapping of VMs. To clarify the terminology, we provide the following definitions:
\begin{definition}\label{def:prefix}
A prefix $P_n$ is the set of the first $n$ VMs in the queue $Q$.
\end{definition}
\begin{definition}\label{def:mapping}
A feasible mapping $\mu$ is an assignment of VMs from $P_n$ to hosts such that:
\begin{enumerate}
\item Each VM is assigned to exactly one host.
\item The capacity constraint of each host is satisfied according to Eq. \ref{eq:chance_cons}.
\end{enumerate}
\end{definition}

We consider two scenarios (see Figure \ref{fig:ProblemStatemente}) for \problem problem with different levels of knowledge of CPU utilization of the VMs in the queue : 1) Hot start, when VMs in the queue have historical data on utilization, and 2) Cold start, when we need an oracle to predict utilization statistics for the VMs, since no historical data is available.

To address the stochastic constraint given in Eq. \ref{eq:chance_cons}, we build upon the foundation of $\Gamma$-robustness theory (see Section \ref{sec:gamma constraint}), which is valid for symmetric random variables. To satisfy the symmetry requirement, we introduce a symmetrizer for the empirical distribution in Section \ref{sec:symmetryzer}. In our experiments, we apply the symmetrizer to VM utilization before scheduling.

It is worth noting that combinatorial problems with a finite number of bins were investigated in operational research \cite{sun2022online, labbe2003upper, bruno1985probabilistic}. However, in our problem, we forbid skipping incoming items, unlike in previous works. This condition significantly complicates the problem because we are unable to choose the desired VM sizes from the queue and have to place all the incoming VMs.

\begin{figure}[ht] 

\centering
\centering
\includegraphics[width=\columnwidth]{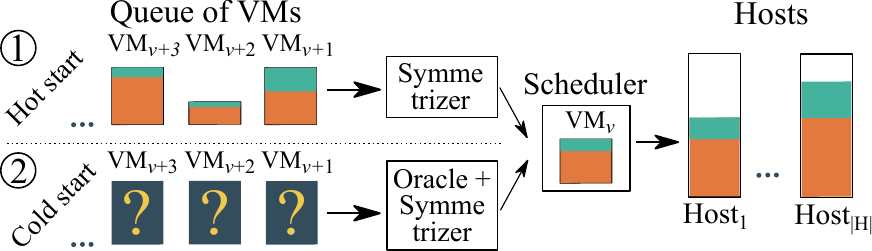}

\caption{Illustration of the problem statement of Online \problem problem with two scenarios: 1) Hot start with known historical utilization of VMs and 2) Cold start which requires oracle to predict VMs statistics for the scheduler. Symmetrizer of VMs distribution is used to satisfy requirements of $\Gamma$-robustness theory.}

\label{fig:ProblemStatemente}
\end{figure}

\section{\texorpdfstring{{$\boldsymbol{\Gamma}$}-robustness constraint}{Γ-Robustness constraint}}\label{sec:gamma constraint}
\subsection{\texorpdfstring{Application of $\Gamma$-robustness theory in VM scheduling}{Application of Γ-robustness theory in VM scheduling}}

To solve the \problem problem, we employ the $\Gamma$-robustness theory by \citet{BERTSIMAS} and formulate deterministic capacity constraint instead of probabilistic one from Eq. \ref{eq:chance_cons}. The theory considers bounded, independent, and symmetrically distributed random variable $u_v$ within the interval  $[uc_v-ur_v,uc_v+ur_v]$, 
where $uc_v$ is the center value and $ur_v$ is the radius value. The theory follows the rationale that multiple independent random variables are unlikely to take the maximum value at the same time. Authors introduce the parameter $\Gamma$, which, in our case, can be interpreted as the maximal number of VMs in a host that are allowed to reach peak utilization simultaneously. Theory binds $\Gamma$ with the probability of constraint violation $\alpha$ from Eq. \ref{eq:chance_cons}. Therefore, $\Gamma$ allows to control trade-off between amount of used space in a host and probability of constraint violation. The more VMs are tolerated to take maximal value, the more space we should allocate for them, and the lower the probability of constraint violation $\alpha$. In the most conservative case, when $\Gamma$ is chosen to be equal to the number of VMs on the host, the allocation policy turns to maximal utilization, reserving space for all VMs at their respective maximum utilization values. This approach ensures a completely safe situation ($\alpha=0$) at the cost of high space reservation. 
To introduce deterministic constraint, we  define a division of VMs on a host $h$ into two sets: $MaxSet_{h}$ and $MinSet_{h}$.
\begin{definition}
$MaxSet_{h}$ is the set comprising the $\Gamma$ VMs with the largest radiuses on a host $h$ and $MinSet_{h}$ is the set of the rest VMs on the host: $V_h  \:\symbol{92} \: MaxSet_{h}$.
\end{definition}

Using this division and $\Gamma$-robustness theory, one can change the probabilistic constraint of Eq. \ref{eq:chance_cons} by a deterministic constraint:
\begin{align}
h^{load}=\overbrace{\sum\limits_{v \in V_h} uc_{v}}^{uc} + \overbrace{\sum\limits_{v \in MaxSet_h} ur_v}^{ur} \le C. 
\label{eq:gamma_constraint}
\end{align}

This transformation is derived from Eq. 4 of the original paper \cite{BERTSIMAS} under the assumption that $\Gamma$ is a positive integer. 
The probabilistic load $h^{load}$ consists of two terms: 1) the sum of the center values of all VMs, denoted as $uc$, and 2) the probabilistic overhead, denoted as $ur$, which is the sum of the largest $\Gamma$ radiuses of VMs on host $h$. 
Bertsimas and Sim demonstrate that if Eq. \ref{eq:gamma_constraint} holds, then probability of violation from constraint of Eq. \ref{eq:chance_cons} can be bounded as:
\begin{align}
\probP\left(\sum_{v \in V_h} u_{v} > C\right) \le B(N,\Gamma),
\label{eq:B_chance_cons}
\end{align}
where $N=|V_h|$ is the number of VMs on the host $h$, and $B(N,\Gamma)$ (see Appendix \ref{appendix: computation of Gamma and widetilde Gamma}, Eq. \ref{eq:B_appendix}) is the function determined by the model. Result of Eq. \ref{eq:B_chance_cons} follows from combination of Proposition 2a and Theorem 3a of the original paper \cite{BERTSIMAS}.

To obtain the value of $\Gamma$ for a required probability $\alpha$, one can numerically solve $B(N,\Gamma)=\alpha$ and acquire $\Gamma(N,\alpha)$ (see Figure \ref{fig:GammaExample}a). 
For a fixed $N$, a smaller $\alpha$ (larger guarantees) results in more VMs taking their maximal values. Assuming $\alpha$ is fixed for all hosts, we denote $\Gamma(N,\alpha) = \Gamma(N)$. This notation implies that whenever the number of VMs ($N$) on a host changes, we need to recalculate both $\Gamma$ and the load $h^{load}$. In the recalculation, the $uc$ component changes with any alteration in $N$, whereas the $ur$ component varies in accordance with changes of $MaxSet_{h}$.

We propose an example to compare three approaches to load calculation of the host based on: 1) flavor sizes, 2) maximal utilization, and  3) stochastic load from Eq. \ref{eq:gamma_constraint}. Consider a host with capacity of 5 cores and 4 VMs of two types of flavors (see Figure \ref{fig:GammaExample}b): 1) $VM_1, VM_2$  with 2 cores each, and 2) $VM_3, VM_4$ with 1 core each. The VMs have different CPU utilization ranges defined by ($uc_v$, $ur_v$): $VM_1$=(1.4, 0.5),  $VM_2$=(0.7, 0.6),  $VM_3$=(0.4, 0.4),  $VM_4$=(0.7, 0.3). The total flavor demand of the VMs is 6 cores, which exceeds the 5-core capacity of the host.  However, by using the maximal CPU utilization of VMs ($MaxUtilization$),  $\sum_{v=1}^{4}{uc_v+ur_v}=5.0$, it is possible to fit all of them into the host. 

The implementation of the stochastic approach allows to further save space and add more VMs. For illustration purposes of $h^{load}$ calculation, we assume that $\Gamma(4)=2$, according to Eq. \ref{eq:gamma_constraint}, we have a probabilistic overhead of two VMs with the largest radiuses $ur=0.5+0.6=1.1$ and sum of centers $uc=1.4+0.7+0.4+0.7=3.2$. Thus, probabilistic constraint in Eq. \ref{eq:gamma_constraint} gives $h^{load}=uc+ur=3.2+1.1=4.3$. This means that using the stochastic approach, we can save 14\% of space compared to maximum utilization and place one more VM such as $VM_2$, $VM_3$ or $VM_4$ if $\Gamma(5)=2$. In this example, $MaxSet_h = \{VM_1, VM_2\}$ and $MinSet_h = \{VM_3, VM_4\}$. 

The Results and discussion section of the paper reveals that the $\Gamma$-robust approach allows for an increase in datacenter utilization by about 125\% compared to the flavor-based approach, and by 18\% compared to the maximal usage approach when $\alpha=0.05$.
 
\begin{figure}[ht] 

\centering
\centering
\includegraphics[width=\columnwidth]{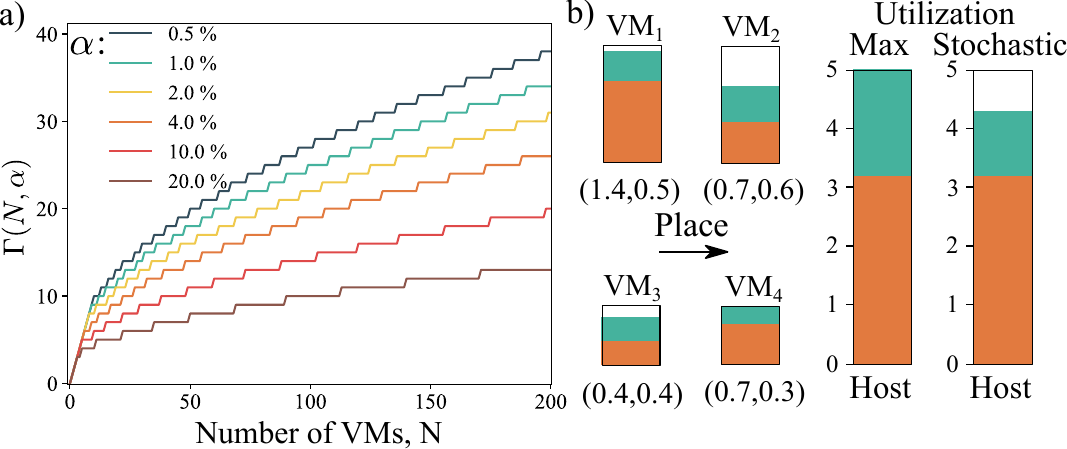}
\caption{ a) Plots of $\Gamma(N,\alpha)$ function, where each curve corresponds to different value of $\alpha$. b) Example of calculation of $\Gamma$-robust constraint, assuming $\Gamma(4)=2$. The stochastic approach allows to decrease required capacity by 14\% compared to maximum utilization. Given 4 VMs with different values of ($uc$, $ur$), maximum utilization requires capacity of 5, whereas probabilistic load $h^{load}$ from Eq. \ref{eq:gamma_constraint} is 4.3.}
\label{fig:GammaExample}
\end{figure}

\subsection{Symmetrizer - a procedure for stochastic dominant symmetrical random variable} 
\label{sec:symmetryzer}
Even though the VM utilization may have a non-symmetric distribution, we can fulfill the symmetry requirement of $\Gamma$-robustness theory by introducing a stochastic-dominant symmetric random variable $\hat{u}$ derived from a given random variable $u$.

\begin{definition}\label{def:stochastic_dominant}
$\hat{u}$ stochastically dominates $u$, denoted as $\hat{u} \succeq u$, if their cumulative distribution functions are related as $F_{\hat{u}}(x) \le F_{u}(x) \quad \forall x$.
\end{definition}

Our objective is to find a symmetrical $\hat{u} \succeq u$ with a minimal center value compared to $u$, such that $\hat{u}c=uc+s_c$, where $s_c$ is the minimal shift required to make $\hat{u}$ symmetrical. It is important to note that due to the construction process, the proposed $\hat{u}$ maintains the same maximum value as $u$ ( $\max(\hat{u}) = \max(u)$).

A symmetrical distribution function $F_{\hat{u}}(x)$ must be equal to its symmetrical reflection $F_{sym(\hat{u})}(x)=1 - F_{\hat{u}}(2\hat{u}c-x)$, and we propose constructing $F_{\hat{u}}(x)$ from $F_u(x)$ and its symmetrical reflection $F_{sym(u)}(x)$.

\begin{figure}[t]
\centering
\includegraphics[width=\columnwidth]{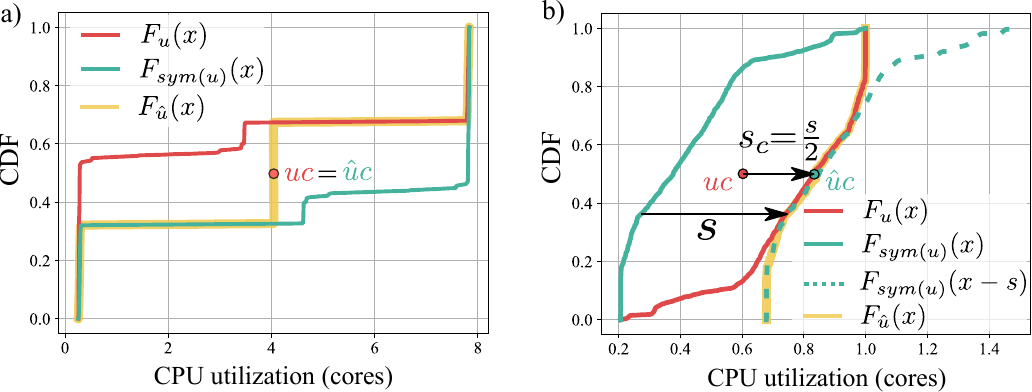}
\caption{Illustration for construction of a symmetrically dominating $F_{\hat{u}}(x)$ (yellow) of given empirical $F_u(x)$ (red) using symmetrical reflection $F_{sym{(u)}}(x)$ (green).  a) Symmetric reflection $F_{sym(u)}(x)$ dominates $F_{u}(x)$ b) Symmetric reflection $F_{sym(u)}(x)$ doesn't dominate  $F_{u}(x)$}
\label{fig:Symmetrizer}
\end{figure}

In the simple case presented in Figure \ref{fig:Symmetrizer}a, $F_{sym(u)}$ (green) dominates $F_u(x)$ (red), leading to an obvious $F_{\hat{u}}(x)$ (yellow), which is a combination of half $F_u(x)$ and half $F_{sym(u)}$. By construction, $F_{\hat{u}}(x)$ is symmetrical and dominates $F_u(x)$ according to Definition \ref{def:stochastic_dominant}. Otherwise, as in Figure \ref{fig:Symmetrizer}b, we shift the symmetrical reflection $F_{sym(u)}(x)$ (green) by $s$ until $F_{sym(u)}(x-s)$ (dashed green) starts to dominate $F_u(x)$, and again comprise the final $F_{\hat{u}}(x)$ (yellow) from the obtained halves of the distributions. In the case of Figure \ref{fig:Symmetrizer}a, $uc=\hat{u}c$, whereas in the case of Figure \ref{fig:Symmetrizer}b, the minimal shift $s_c$ is found as half of the maximal distance between CPU utilization of $F_u(x)$ and its symmetrical reflection $F_{sym(u)}(x)$. We formulate this fact in the following theorem and prove it in Appendix \ref{thm:proof_minimal_s_fordominance}.

\begin{theorem}\label{thm:minimal_s_fordominance}
For a given $u$, the dominating $\hat{u}$ with minimal center has $s_c = \min{{\frac{s}{2}:F_u(x) \ge F_{sym(u)}(x-s)}}$.
\end{theorem}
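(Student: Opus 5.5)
The proof will have two directions. Throughout, $sym(u)$ is the reflection of $u$ about its center $uc$, so $F_{sym(u)}(z)=1-F_u(2uc-z)$. A symmetric $\hat u$ with center $\hat uc=uc+s_c$ satisfies $F_{\hat u}(x)=1-F_{\hat u}(2\hat uc-x)$, with the usual left-limit convention at atoms. The key identity is
\begin{align*}
F_{sym(u)}(x-s)=1-F_u(2uc+s-x),
\end{align*}
so for $s=2(\hat uc-uc)$ the shifted reflection $F_{sym(u)}(x-s)$ is exactly the reflection of $F_u$ about the point $\hat uc$. This explains why the shift $s$ corresponds to a center shift of $s/2$. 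Let $S=\{s: F_u(x)\ge F_{sym(u)}(x-s)\ \forall x\}$.

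\emph{Lower bound (necessity).} Let $\hat u\succeq u$ be symmetric about $c$, and set $s=2(c-uc)$. Definition \ref{def:stochastic_dominant} gives $F_{\hat u}(y)\le F_u(y)$ for all $y$. Applying this at $y=2c-x$ and then using symmetry gives, for every $x$,
\begin{align*}
F_u(x)\ge F_{\hat u}(x)=1-F_{\hat u}(2c-x)\ge 1-F_u(2c-x)=F_{sym(u)}(x-s).
\end{align*}
Hence $s\in S$, and therefore $c-uc=s/2\ge \inf S/2$. So no dominating symmetric variable can have a center shift smaller than the one claimed.

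\emph{Upper bound (construction).} First I will show that $\inf S$ is attained, so that the ``$\min$'' in the statement makes sense. The function $F_{sym(u)}(x-s)$ is nonincreasing in $s$, so $S$ is an up-set (a half-line). Closedness of $S$ follows from right-continuity of CDFs: if $s_n\downarrow s^*$ with $s_n\in S$, then $F_{sym(u)}(x-s_n)\to F_{sym(u)}(x-s^*)$, provided the reflected CDF is taken right-continuous. Nonemptiness follows because $u$ is bounded: for large $s$ the shifted reflection lies entirely to the right of the support of $u$.

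Let $s^*=\min S$, $c=uc+s^*/2$ and $G(x)=F_{sym(u)}(x-s^*)$. Following the two-halves picture of Figure \ref{fig:Symmetrizer}, define
\begin{align*}
F_{\hat u}(x)=\begin{cases} G(x), & x<c,\\ F_u(x), & x\ge c.\end{cases}
\end{align*}
By the identity above, $G$ is the reflection of $F_u$ about $c$, so each half is the mirror image of the other and $\hat u$ is symmetric about $c$. Dominance holds because $G\le F_u$ by the choice $s^*\in S$. Monotonicity at the junction also follows from this inequality, since $G(c^-)\le F_u(c^-)\le F_u(c)$. The construction therefore achieves $\hat uc-uc=s^*/2=s_c$, which matches the lower bound.

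\emph{Expected obstacle.} The main difficulty is the measure-theoretic bookkeeping, not the idea. The reflection of a right-continuous CDF is naturally left-continuous, and atoms of $u$ at or near $c$ must be split correctly. I will handle this by working with the right-continuous modification of $F_{sym(u)}$ and by checking two things explicitly: that $F_{\hat u}$ is a valid CDF, and that the symmetry relation holds with the correct one-sided limits. The empirical, finitely supported case is where the paper applies the result, and there these checks reduce to finitely many jump points. I would also remark that $\max(\hat u)=\max(u)$ follows because the right half of $F_{\hat u}$ coincides with $F_u$.
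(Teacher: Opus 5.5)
Your proof is correct and follows the paper's argument. Your necessity step is its Fact 1: reflect $F_{\hat u}\le F_u$ about the center of $\hat u$ to get $F_u(x)\ge F_{sym(u)}(x-s)$. Your construction is its Fact 2: glue the shifted reflection on $x<c$ to $F_u$ on $x\ge c$. You also add two welcome checks the paper omits, namely that the minimum is attained and that the glued function is monotone at the junction.

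One small repair to the closedness step: as $s_n\downarrow s^*$ the arguments $x-s_n$ \emph{increase} to $x-s^*$, so right-continuity only yields $F_{sym(u)}((x-s^*)^-)\le F_u(x)$. Finish by applying this at points $x'>x$ and letting $x'\downarrow x$. Alternatively, use the left-continuous version $1-F_u(2uc-z)$, which defines the same set $S$ and gives the limit directly.
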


For the stochastic constraint given by Eq. \ref{eq:gamma_constraint}, we need $\hat{u}c$ and $\hat{u}r$ values, which are obtained in Procedure \ref{alg:Symmetrizer}, where the shift $s_c$ is calculated as half of the largest distance (row 2) between the empirical distribution $u$ and its symmetrical reflection $u_{sym}$ (row 1).

\begin{procedure}[ht!]
    \caption{Symmetrizer}
    \label{alg:Symmetrizer}
    \textbf{Input}:  array with realizations of $u=[u_1, u_2, ..., u_t]$ with center $uc$ and radius $ur$\\
    \textbf{Output}: $\hat{u}c$ and $\hat{u}r$ of dominating symmetrical $\hat{u}$
    
    \begin{algorithmic}[1] 
        \STATE $u_{sym} \leftarrow [ 2\cdot uc - u_i \quad  \text{for} \; i\; \text{in}\; \{1,2,...,t\} ]$
        \STATE $s_c \leftarrow \max(sort(u) - sort(u_{sym}))/2$ 
        \STATE $\hat{u}c \leftarrow uc + s_c$
        \STATE $\hat{u}r \leftarrow ur - s_c$
        \RETURN $\hat{u}c, \hat{u}r$
    \end{algorithmic}
\end{procedure}

\section{VM placement in Hot start scenario } \label{sec:VM placement with trusted predictions}
The roadmap for developing the scheduling algorithm involves four stages. In the first stage, we propose a novel MILP model to obtain the optimal solution for small-scale offline \problem problem. Moving on to the second and third stages, which deal with larger problem instances, we design an upper bound named PrefixUB and a lower bound named CloseRadiusLB to narrow the potential range of the optimal solution in offline problem settings. In the fourth stage, we shift our focus to the online version of \problem problem, and propose heuristics towards its solution.

\subsection{Optimal solution of offline \problem problem with MILP model} \label{subsec: optimal solution}

\begin{table}[t]
\begin{center}
\caption{Notation Summary of MILP}
\label{tab: Model parameters}
\resizebox{\linewidth}{!}{
\begin{tabular}{p{1.4cm} p{7cm}}
\hline
Parameters         & Descriptions    \\ 
\hline
$H$      & The set of indexes for hosts $h \in \{1,2,...,|H|\}$\\
$V$                        & The set of indexes for VMs in queue Q, that is large enough to be implaceable into hosts, $v \in \{1,2,...,|Q|\}.$   \\ 
$uc_v$, $ur_v$ & The values of center and radius of $v$-th VM respectively\\

$C$                     & The total available capacity of the host.                     \\

$ur_{max}$                  & The maximal value of radiuses of all VMs, where $ur_{max}=\max\limits_{v\in V} ur_i$.  \\
$N_{max}$                      & An upper bound on a number of VMs on a host\\ 
$\Gamma_{k}$             & The number of VMs in $MaxSet$ for host with $k$ VMs, where $\Gamma_k=\Gamma(k)$ for $k \in \{1,2,..., N_{max}\}$ \\
\hline
Variables       &    Descriptions   \\ 
\hline
$R_{h,k}$                  & The binary variable indicating whether the number of VMs allocated on the host $h$-th host is $k$.     \\
$y_{v,h}$                  &The binary variable indicating whether the $v$-th VM is allocated on the $h$-th host and belongs to the $MaxSet_{h}$.      \\
$x_{v,h}$                  &The binary variable indicating whether the $v$-th VM is allocated on the $h$-th host and belongs to the $MinSet_{h}$.    \\
$S_h$                      & A variable separating $MinSet_{h}$ and $MaxSet_{h}$: $\max\limits_{v \in MinSet_{h}} ur_v \leq S_h \leq \min\limits_{v \in MaxSet_{h}} ur_v$.    \\ 
\hline
\end{tabular}}
\end{center}
\end{table}

We mathematically define the problem from Section \ref{sec:ProblemStatement} through a  MILP formulation. In our problem statement, we fix the probability of constraint violation for each host to $\alpha$, based on the cloud SLA requirement. Our approach differs from the work of \citet{BERTSIMAS} where the authors fix $\Gamma$, which can result in different constraint violation probabilities across the hosts in an optimal solution. $\Gamma$-robustness constraint given by Eq. (\ref{eq:gamma_constraint}) states that among $N$ VMs, only $\Gamma(N)$ belong to $MaxSet_h$ and contribute their radius and center values to the host load, whereas the rest of the VMs from $MinSet_h$ are taken only by center values. Therefore, the main challenge addressed by our MILP model is dividing the VMs on a host between $MaxSet_h$ and $MinSet_h$ in order to fix the probability of violation for every host to $\alpha$. We were able to achieve this division in linear fashion.  We summarize the notations for the MILP model in Table \ref{tab: Model parameters}.

The objective function is to maximize the total number of VMs that can be provisioned on the given hosts. We formulate the MILP model for the \problem problem as follows:

\begin{subequations}\label{eq:2}
\begin{align}
\max  \hspace{0.5cm}&\sum_{h\in H}\sum _{v \in V}(x_{v,h}+y_{v,h}), \label{obj}\\
 s.t. \hspace{0.5cm} &\sum_{h\in H}(x_{v,h}+y_{v,h})\leq 1  \hspace{0.5cm}\forall v\in V , \label{cons:c-1}\\
 &\sum_{k=0}^{N_{max}} R_{h,k}=1, \hspace{0.5cm}\forall h \in H, \label{cons:c-2}\\
 &\sum_{v\in V} (x_{v,h}+  y_{v,h})=\sum_{k=0}^{N_{max}} R_{h,k}\cdot k \hspace{0.5cm}\forall h\in H ,\label{cons:c-3}\\
 &\sum_{v\in V} y_{v,h}=\sum_{k=0}^{N_{max}} (R_{h,k}\cdot \Gamma_{k}) \hspace{0.5cm} \forall h\in H,  \label{cons:c-4}\\ 
 & \sum_{h\in H}(x_{v,h}+y_{v,h})\geq \sum_{h\in H}(x_{v+1,h}+y_{v+1,h}) \hspace{0.5cm} \nonumber\\  & \hspace{4cm} \forall v\in V, \label{cons:c-5}\\
 & S_h\geq x_{v,h}\cdot ur_{v} \hspace{0.5cm} \forall v\in V, h\in H \label{cons:c-6}\\
 &  ur_{max}-S_h\geq y_{v,h}\cdot ( ur_{max}-ur_v) \hspace{0.5cm}\forall v\in V, h\in H \label{cons:c-7} \\
 &\sum_{v \in V} (x_{v,h}+ y_{v,h})\cdot uc_{v} +\sum_{v\in V} y_{v,h}\cdot ur_{v}\leq C,  \forall h\in H \label{cons:c-8}\\
 &R_{h,k} \in \{0,1\} \hspace{0.5cm}\forall k\in [0,1,\cdots,N_{max}], h \in H  \label{cons:c-9}\\
 &y_{v,h} \in \{0,1\} \hspace{0.5cm}\forall v\in V, h\in H \label{cons:c-10}\\
 &x_{v,h} \in \{0,1\} \hspace{0.5cm}\forall v\in V, h\in H  \label{cons:c-11} 
\end{align}
\end{subequations}

Constraint (\ref{cons:c-1}) indicates that each VM is either in the $MaxSet_{h}$, or the $MinSet_{h}$, or rejected. Constraint (\ref{cons:c-2}) indicates that the number of VMs placed on each host is uniquely determined. Constraint (\ref{cons:c-3}) ensures that the number of VMs allocated on host $h$ equals $k$. Constraint (\ref{cons:c-4}) guarantees that the number of VMs in the $MaxSet_h$ is exactly $\Gamma_{k}$. Constraint (\ref{cons:c-5}) preserves the order of VMs in queue for allocation and assures that VMs are not skipped. Constraints (\ref{cons:c-6}) and (\ref{cons:c-7}) separate the $MaxSet_{h}$ and the $MinSet_{h}$ to ensure that the minimum radius in the $MaxSet_{h}$ is weakly greater than the maximum in the $MinSet_{h}$. Constraint (\ref{cons:c-8}) guarantees that the capacity constraint remains intact. The remaining constraints define the domains of the variables.

\subsection{Upper bound-PrefixUB} \label{subsec:upper bound2}
Commercial solvers can provide optimal solutions of our MILP model (section \ref{subsec: optimal solution}) for a few hosts in a reasonable time. However, the solvers encounter problems with tens of hosts because of NP-completeness of \problem problem. To estimate the optimal solution for many hosts, we propose a numerical upper bound named PrefixUB. To provide an upper bound on the number of placed VMs, PrefixUB searches for the shortest prefix from the queue that is implaceable in any mapping of VMs to hosts. This prefix has a 
 larger CPU demand than capacity of all hosts in datacenter (see Algorithm \ref{alg:PrefixUB}). By establishing a lower bound on the probabilistic utilization overhead $u_{lb}$, we can determine if the prefix is implaceable. If the sum of the VMs' center value $uc$ and the lower bound on the probabilistic utilization overhead $ur_{lb}$ exceeds the host available capacity, i.e., $uc+ur_{lb} \geq |H|\cdot C$, then the prefix is considered implaceable (row 12).

To obtain a tight estimation on the probabilistic overhead of all hosts $ur_{lb}$, we strive to identify the largest radiuses of VMs from a given prefix $P_i$ that either are placed in $MaxSet$ in any mapping or provide accurate estimation of radiuses of VMs from the $MaxSet$. To identify these VMs, we utilize a lower bound on the cumulative $\Gamma$ of the prefix $\Gamma^{lb} \le \sum_{h=1}^{|H|} \Gamma(|V_h|)$ (row 4). This lower bound is independent of the placement of VMs on hosts and provides a lower limit on the total number of VMs in $MaxSet$. If $\Gamma^{lb}$ increases after adding a new VM to the prefix (row 8), then at least this VM belongs to $MaxSet$, and we increase $ur_{lb}$ by its radius (row 9). The proposed algorithm provides an upper bound on the number of placed VMs for any mapping. We formulate and prove this fact in Theorem \ref{thm:overhead2}, which depends on two assumptions: 1) There exists a concave approximation of $\Gamma(N)$ denoted as $\tilde{\Gamma}(N)$ (see Section \ref{sec:concave}) which is required for calculation of $\Gamma^{lb}$, and 2) $\Gamma^{lb}$ is a lower bound on the number of VMs in $MaxSet$ in any mapping (see Section \ref{subsec:GammaLB}).

\begin{theorem}\label{thm:overhead2}
\textnormal{The computed value $ur_{lb}$ is a lower bound on the probabilistic overhead capacity for any feasible mapping of $Q$. (The proof is in Appendix \ref{appendix: proof of overhead})}
\end{theorem}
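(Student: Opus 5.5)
We need to show that for any prefix $P_i$ and any feasible mapping of $P_i$, the total probabilistic overhead $\sum_{h}\sum_{v\in MaxSet_h} ur_v$ is at least the value $ur_{lb}$ that PrefixUB has accumulated after processing $P_i$. We take two facts as given, as stated before the theorem. First, the concave approximation $\tilde{\Gamma}$ makes $\Gamma^{lb}(P_i)$ non-decreasing in $i$ and increasing by at most one per added VM. Second, $\Gamma^{lb}(P_i)\le \sum_h \Gamma(|V_h|)$ for every mapping of $P_i$ (Section \ref{subsec:GammaLB}).

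We define the index set $I_i=\{j\le i:\ \Gamma^{lb}(P_j)>\Gamma^{lb}(P_{j-1})\}$ of VMs at which the algorithm increments $ur_{lb}$. Then $ur_{lb}(P_i)=\sum_{j\in I_i} ur_{v_j}$ and $|I_i|=\Gamma^{lb}(P_i)$, assuming unit increments and $\Gamma^{lb}(P_0)=0$. Fix a feasible mapping $\mu$ of $P_i$ and let $M=\bigcup_h MaxSet_h$. By definition $|M|=\sum_h\Gamma(|V_h|)\ge \Gamma^{lb}(P_i)=|I_i|$. The argument will go through the prefix-wise structure rather than only the cardinalities.

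\textbf{Key step: a matching argument.} We will build an injection $\phi:I_i\to M$ with $ur_{\phi(j)}\ge ur_{v_j}$, which gives $\sum_{v\in M}ur_v\ge\sum_{j\in I_i}ur_{v_j}=ur_{lb}$. Since $\mu$ restricted to $P_j$ is a feasible mapping of $P_j$ (fewer VMs only lowers the load), applying the lower bound to each prefix $P_j$ with $j\in I_i$ shows that the restricted mapping has at least $\Gamma^{lb}(P_j)=|I_j|$ MaxSet slots. The plan is to order the elements of $I_i$ by decreasing radius and argue Hall-style. For each threshold $r$, the number of VMs of $P_i$ placed in $M$ with radius $\ge r$ should be at least $|\{j\in I_i: ur_{v_j}\ge r\}|$. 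Because each $MaxSet_h$ holds the $\Gamma$ largest radii on host $h$, a VM with radius $\ge r$ is excluded from $M$ only if its host's MaxSet is already filled by radii $\ge r$. We would combine this observation with the prefix counts to obtain the threshold inequality.

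\textbf{Main obstacle.} The hard part is the threshold inequality itself, because the counting bound is prefix-indexed while radii are not ordered by arrival. If it resists a direct proof, the fallback is to check how Algorithm \ref{alg:PrefixUB} actually selects radii at row 9. If the algorithm in fact adds the $\Gamma^{lb}(P_i)$ \emph{smallest} radii of the prefix, or the smallest radii sorted and maintained incrementally, then the injection is immediate: any $|M|\ge\Gamma^{lb}$ radii sum to at least the sum of the $\Gamma^{lb}$ smallest radii. I would check this reading first, since it reduces the theorem to the cardinality bound plus a trivial sorting inequality.

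Finally, since $ur_{lb}$ lower-bounds the overhead, any feasible mapping satisfies $uc(P_i)+ur_{lb}\le uc+ur\le|H|C$. So the first prefix violating this inequality is implaceable, which is the use made of the theorem at row 12.
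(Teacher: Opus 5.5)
\textbf{The gap: you are bounding the wrong quantity, and for that quantity your key inequality is false.} In PrefixUB, row~3 sorts $P_i$ by radius in decreasing order, and row~4 evaluates GammaLB on the prefixes \emph{of that sorted list}. Row~9 then adds the radius of the $j$-th largest-radius VM whenever $\Gamma^{lb}$ jumps at sorted position $j$. Your $I_i$ is built from arrival-order prefixes instead. For that set, the threshold inequality you postpone is not merely hard to prove; it fails.

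Here is a counterexample. Take $|H|=2$, $C=10$, all centers equal to $1$, and the stylized concave $\Gamma(N)=\min(N,4)$, so $\widetilde{\Gamma}=\Gamma$. For any $n$ of these VMs, $\Gamma^{lb}$ equals $\min(n,4)$ for $n\le 10$ and $4+\min(n-10,4)$ for $n\ge 10$. It therefore jumps exactly at $n=1,\dots,4$ and $n=11,\dots,14$. Let arrivals $1$--$4$ and $11$--$14$ have radius $R=0.5$, and arrivals $5$--$10$ have radius $0.1$.
\begin{itemize}
\item Your quantity is $8R=4$.
\item The mapping with the eight large VMs on host~1 (load $8+4R=10$) and the six small ones on host~2 (load $6.4$) is feasible, with overhead $4R+4(0.1)=2.4$.
\item The algorithm's own value, computed on the sorted list, has jumps at sorted positions $1$--$4$ (all large) and $11$--$14$ (all small), so it is exactly $2.4$.
\end{itemize}
Your fallback reading is also not what row~9 does. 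The $\Gamma^{lb}(P_{14})=8$ smallest radii sum to $1.6$. That is a valid but strictly weaker bound, so proving it does not prove the theorem.

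The missing idea is that the decreasing-radius order makes MaxSet monotone. Let $T_m$ be the $m$ largest-radius VMs. Every VM outside $T_m$ has a smaller radius and $\Gamma$ is non-decreasing, so each host's MaxSet under $\mu|_{T_m}$ is contained in its MaxSet under $\mu$. With this fact your Hall-type argument closes directly. For every $m$,
$$|M\cap T_m|\ge\sum_h\Gamma(|V_h\cap T_m|)\ge\Gamma^{lb}_m\ge|\{j\le m:\Gamma^{lb}_j>\Gamma^{lb}_{j-1}\}|.$$
This is your threshold inequality at $r=ur_{v_m}$, and it gives the injection.

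The paper reaches the same conclusion by induction along the sorted list, using the elementwise domination $MaxSet(\mu)\succeq S(P_j)$. Adding $v_{j+1}$, the smallest radius so far, can only enlarge MaxSet. At a jump there are two cases:
\begin{itemize}
\item MaxSet gains an element of radius at least $ur_{v_{j+1}}$; or
\item $|MaxSet(\mu)|\ge\Gamma^{lb}_{j+1}\ge|S(P_j)|+1$, so the smallest element of MaxSet is spare and can be matched with $v_{j+1}$.
\end{itemize}

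Finally, your assumption that $\Gamma^{lb}$ increases by at most one per step is not stated in the paper and is not needed. All that is required, by you and tacitly by the paper, is $|S(P_j)|\le\Gamma^{lb}_j$, which follows from $\Gamma^{lb}_j$ being non-decreasing in $j$.
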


\begin{algorithm}[tb] 
    \caption{PrefixUB}
    \label{alg:PrefixUB}
    \textbf{Input}: $Q$ is a queue of VMs  \\
    \textbf{Output}: $ub$ is an upper bound on the number of placed VMs in optimal solution
    \begin{algorithmic}[1] 
        \FOR{$i\leftarrow 1$ to $|Q|$}
            \STATE $P_i \leftarrow$ $i$-th prefix of $Q$ 
            \STATE sort $P_i$ by radiuses in decreasing order
            \STATE $\Gamma^{lb}_j\leftarrow$ GammaLB$(j$-th prefix of $P_i)$ \textbf{for} $j\leftarrow 0$ to $i$ 
            \STATE $uc \leftarrow \sum_{j\leq i} uc_j$
            \STATE $ur_{lb} \leftarrow 0 $ 

            \FOR{$ j\leftarrow 1$ to $i$}
                \IF{$\Gamma^{lb}_j > \Gamma^{lb}_{j-1}$}
                    \STATE $ur_{lb}\leftarrow ur_{lb} + ur_{j}$
                \ENDIF
            \ENDFOR
            \IF{$uc + ur_{lb} > C|H|$} 
                \RETURN $i - 1$
            \ENDIF
        \ENDFOR
        \RETURN $|Q|$
    \end{algorithmic}
\end{algorithm}

\begin{procedure}[ht]
    \caption{GammaLB}
    \label{alg:GammaLB}
    \textbf{Input}: $P$ is a prefix of VMs\\
    \hphantom{\textbf{Input}:} $|H|$ is a number of hosts 
    
    \textbf{Output}: $lb$ is a lower bound on the number of VMs in MaxSet of all hosts in any mapping;
    \begin{algorithmic}[1] 
        \STATE sort $P$ by centers in increasing order
        \STATE $k, n, c\leftarrow 0$
        \STATE $d \leftarrow$ a zero array of length $|H|$ 
        \FOR{$i\leftarrow 1$ to $P$} 
            \STATE $n\leftarrow n + 1$
            \STATE $c\leftarrow c + uc_{i}$ 
            \IF{$c \geq C$} 
                \STATE $c\leftarrow c - C$
                \STATE $d_k\leftarrow n$, $n\leftarrow 0$ 
                \STATE $k\leftarrow k + 1$
            \ENDIF
        \ENDFOR
        \STATE $d_k \leftarrow n$ 
        \RETURN $\sum\nolimits_{k=1}^{|H|} \tilde{\Gamma}(d_k)$
    \end{algorithmic}
\end{procedure}

\subsubsection{Concave approximation of 
\texorpdfstring{$\mathbf{\Gamma}$}{Lg}}\label{sec:concave}
As one can see in Figure \ref{fig:GammaExample}a, the function  $\Gamma(N)$ is not concave. We use the following Linear Programming (LP) model to derive a lower concave approximation $\widetilde{\Gamma}(N)$ of $\Gamma(N)$ for all $N=0,\cdots,N_{max}$:
\begin{align*}
  & \max_{\widetilde{\Gamma}(N)}  \ \sum_{N=0}^{N_{max}}\widetilde{\Gamma}(N) \\
  & 0\le \widetilde{\Gamma}(N) \le \Gamma(N) & 0 \le N \le N_{max}, \\
  & \widetilde{\Gamma}(N+1) - \widetilde{\Gamma}(N) \le \widetilde{\Gamma}(N) - \widetilde{\Gamma}(N-1) & 1 < N < N_{max}.
\end{align*}

More detailed descriptions on computation of \texorpdfstring{$\Gamma$}{Lg} are in Appendix \ref{appendix: computation of Gamma and widetilde Gamma}.

\subsubsection{Lower bound on $|MaxSet|$ of any mapping}
\label{subsec:GammaLB}
To find the lower bound $\Gamma^{lb}(P_n)$ on the number of VMs in $MaxSet$ of any mapping, we use a concave approximation of $\Gamma(N)$ and provide a procedure to construct a packing that \textit{majorizes} all possible placements of VMs to hosts.

\begin{definition}\label{def:density2}
\textnormal{For two non-increasing sequences of non-negative integer numbers:}

\[
\begin{array}{l}
  a_1 \geq a_2 \geq \cdots \geq a_n \\
  b_1 \geq b_2 \geq \cdots \geq b_n
\end{array},\;\;\;\sum\nolimits_{i=1}^n{a_i} = \sum\nolimits_{i=1}^n{b_i},
\]
\textnormal{Sequence $\{a_i\}$ \textit{majorizes} $\{b_i\}$ if the following relation holds:}
    \begin{align*}
        \forall k \leq n: \sum\nolimits_{i=1}^k a_i \geq \sum\nolimits_{i=1}^k b_i\label{eq:density}. 
    \end{align*}
\end{definition}
In the packing problem, $a_i$ and $b_i$ stand for the number of VMs on $i$-th host in two different mappings. Note that such placement strives to place the most VMs in first hosts and even if $a_2 < b_2$, the condition on cumulative sum requires $a_1 + a_2 \ge b_1 + b_2$. 

To produce a packing $\{d_j\}$ that majorizes all possible placements (see Procedure \ref{alg:GammaLB}), we neglect the radiuses of the VMs, sort them by centers in ascending order, and pack the VMs using the Next-fit heuristic with relaxation - if a VM unfits into a host, we split it between that host and the next one (rows 7-10). Such packing majorizes any other placement by construction, since for a given prefix of VMs the packing maximizes $\sum\nolimits_{i=1}^k d_i$ for any $k$.

We use Hardy–Littlewood–Pólya theorem \cite{hardy1952inequalities} that states if $\{a_i\}$ majorizes $\{b_ i\}$ and function $f$ is concave then 
\begin{equation}
    \sum\nolimits_{i=1}^{N} f(a_i) \leq \sum\nolimits_{i=1}^{N} f(b_i)
\end{equation}

Taking $\widetilde{\Gamma}(N)$ as the lower concave approximation of $\Gamma(N)$ (see Section \ref{sec:concave}) we obtain the lower bound

\begin{equation}
    \Gamma^{lb}(P_n)= \left\lceil \sum\limits_{i=1}^{|H|} \widetilde{\Gamma}(d_i) \right\rceil \leq \sum\limits_{i=1}^{|H|} \widetilde{\Gamma}(a_i)  \quad \forall \{a_i\}.
\end{equation}

\subsection{Lower bound-CloseRadiusLB} \label{subsec:lower bound}

To propose a Lower Bound (LB) on the optimal number of placeable VMs from the queue, we consider an offline version of the packing problem where we have full knowledge of the VMs in the queue. We exploit the idea that a well-packed host has VMs with close radiuses. The concept is supported by the following two lemmas.

\begin{lemma} \label{lemma:radiuses seperation}
\textnormal{Suppose we want to divide $2N$ VMs with a constant $uc_v$ into two groups, each consisting of $N$ VMs, while minimizing the probabilistic overhead $ur$. In the optimal solution, one group consists of $N$ VMs with the largest radiuses, and the second group has $N$ VMs with the smallest radiuses. (The proof is in Appendix \ref{appendix: proof of radiuses seperation})}
\end{lemma}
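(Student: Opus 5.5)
Since all centers are equal, the sum of centers is the same for every split, so the objective reduces to the total probabilistic overhead. Both groups have $N$ VMs, so each has the same parameter $\Gamma=\Gamma(N)$. By Eq.~\ref{eq:gamma_constraint}, the overhead of a group $G$ is the sum of its $\Gamma$ largest radiuses, which we write $T_\Gamma(G)$. We must therefore minimize $T_\Gamma(G_1)+T_\Gamma(G_2)$ over partitions of the $2N$ radiuses into two halves. Order the radiuses as $r_1\ge r_2\ge\cdots\ge r_{2N}$. We will show that the split $G_1=\{r_1,\dots,r_N\}$, $G_2=\{r_{N+1},\dots,r_{2N}\}$ attains the minimum. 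Its value is
\begin{equation*}
\sum_{i=1}^{\Gamma} r_i+\sum_{i=N+1}^{N+\Gamma} r_i .
\end{equation*}

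The key step is a lower bound valid for every partition. Take an arbitrary partition $(A,B)$. Let $a_1\ge\cdots\ge a_N$ be the sorted elements of $A$ and $b_1\ge\cdots\ge b_N$ those of $B$. The multiset union of the two top-$\Gamma$ sets $\{a_1,\dots,a_\Gamma\}\cup\{b_1,\dots,b_\Gamma\}$ contains $2\Gamma$ elements. We bound its sum from below by a counting argument.

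\textbf{Step 1.} Among the top $N$ radiuses $r_1,\dots,r_N$, say $p$ lie in $A$ and $N-p$ lie in $B$. Assume without loss of generality that $p\ge N-p$. Then $A$ contains at least $\lceil N/2\rceil$ of them. Since $\Gamma\le N$, the set $A$ in fact contains at least $\min(\Gamma,p)$ of the top-$N$ elements among its own top $\Gamma$.

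\textbf{Step 2.} A cleaner route is an exchange argument. Suppose some element of $B$, lying within $B$'s top $\Gamma$, is larger than some element of $A$ lying outside $A$'s top $\Gamma$. Swap these two elements. We claim this does not increase $T_\Gamma(A)+T_\Gamma(B)$:
\begin{itemize}
\item the top-$\Gamma$ sum of $A$ rises by at most (large $-$ small);
\item the top-$\Gamma$ sum of $B$ drops by at least (large $-$ the new element's contribution).
\end{itemize}
This inequality must be checked carefully, because $B$'s top-$\Gamma$ set may change after the swap. I expect this verification to be the main obstacle: the clean inequality is $T_\Gamma(S-x+y)\le T_\Gamma(S)-x+\max(y,\text{the }(\Gamma{+}1)\text{th element of }S)$.

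\textbf{Alternative for Step 2.} If the exchange becomes messy, I would instead use the direct bound
\begin{equation*}
T_\Gamma(A)+T_\Gamma(B)\ \ge\ \sum_{i=1}^{\Gamma} r_i+\sum_{i=N+1}^{N+\Gamma} r_i .
\end{equation*}
To prove it, consider the $2\Gamma$ chosen elements. Any half $A$ has $N-\Gamma$ unchosen elements, so at most $2(N-\Gamma)$ elements overall are unchosen. Also, each chosen element is at least every unchosen element of its own group. Let $x_k$ be the $k$-th largest chosen element. The top $\Gamma$ chosen elements have sum at least $r_1+\dots+r_\Gamma$. This holds because the largest element overall is chosen, and more generally the $k$-th largest overall, $k\le\Gamma$, is dominated: at most $k-1$ elements exceed it, so it lies in the top-$\Gamma$ of its group.

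For the chosen elements ranked $\Gamma+1,\dots,2\Gamma$, compare them with $r_{N+1},\dots,r_{N+\Gamma}$. The number of elements strictly above position $N+j$ that are unchosen is at most $N-\Gamma$ per group. This count forces the $(\Gamma+j)$-th chosen element to have overall rank at most $N+j$. A pigeonhole count on ranks gives exactly this: for $x_{\Gamma+j}$ to have rank greater than $N+j$, more than $N-\Gamma$ unchosen elements must lie above it within a single group, which contradicts that group having only $N-\Gamma$ unchosen elements.

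Summing the two parts gives the bound. Equality holds for the sorted split, which proves the lemma.
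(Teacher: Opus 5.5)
Your final (``Alternative'') argument is correct and is essentially the paper's proof: $MaxSet_1\cup MaxSet_2$ always contains $r_1,\dots,r_\Gamma$ and never contains $r_{N+\Gamma+1},\dots,r_{2N}$ (your case $j=\Gamma$), so its remaining $\Gamma$ radiuses sum to at least $r_{N+1}+\dots+r_{N+\Gamma}$, which the sorted split attains; your rank-by-rank bound $x_{\Gamma+j}\ge r_{N+j}$ is a pointwise form of the paper's step of taking the $\Gamma$ smallest radiuses in the window $\Gamma+1,\dots,N+\Gamma$. The exchange route of Step 2 is unnecessary, but in the counting step you should state why the at least $N-\Gamma+1$ unchosen elements above $x_{\Gamma+j}$ all lie in one group: none can belong to $x_{\Gamma+j}$'s own group, since chosen elements of a group dominate its unchosen ones, so all lie in the other group, which has only $N-\Gamma$ unchosen elements --- a plain pigeonhole split between the two groups would not give this.
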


Lemma \ref{lemma:radiuses seperation} supports our claim that VMs with similar radiuses should be packed together.

We further illustrate the advantage of the close radius approach by examining the sum of radiuses of VMs from $MinSet$: $ur_{MinSet} = \sum\nolimits_{v=\Gamma}^{N} ur_v$. The sum  represents an improvement in terms of CPU utilization of $\Gamma$-robustness theory over $MaxUtilization$ approach, where we take any VM $v$ by its maximal value $uc_v+ur_v$. 

\begin{lemma} \label{lemma:best radiuses values}
Among all possible fillings of a host of capacity $C$ with $N$ VMs of fixed $uc = \sum\nolimits_{v=1}^{N}{uc_v}$,
the filling that maximizes $ur_{MinSet}$ is achieved 
when radiuses of all VMs are the same and equal to $ur_{opt}=\left(C - \sum{uc_v}\right)/\Gamma_N$. (The proof is in Appendix \ref{appendix: proof of best radiuses values})
\end{lemma}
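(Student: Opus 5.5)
The plan is to recast the statement as a small optimization over the sorted radius vector and bound the objective with an averaging argument.

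Fix $N$ and write $\Gamma=\Gamma_N$. Let the radii of the $N$ VMs on the host be ordered as $ur_1 \ge ur_2 \ge \cdots \ge ur_N \ge 0$. By the definition of $MaxSet_h$, the set $MaxSet_h$ consists of the first $\Gamma$ of these and $MinSet_h$ consists of the remaining $N-\Gamma$. Since $uc=\sum_v uc_v$ is fixed, the $\Gamma$-robust capacity constraint of Eq.~\ref{eq:gamma_constraint} reduces to
\begin{equation*}
\sum\nolimits_{i=1}^{\Gamma} ur_i \le R, \qquad R := C - \sum\nolimits_{v} uc_v .
\end{equation*}
So the lemma becomes the following problem: maximize $\sum_{i=\Gamma+1}^{N} ur_i$ over non-increasing non-negative sequences subject to this single linear constraint.

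The key step is a chain of two inequalities. First, every $MinSet$ radius is at most the smallest $MaxSet$ radius, so $ur_i \le ur_\Gamma$ for $i>\Gamma$. Second, the smallest of the top $\Gamma$ radii is at most their average, so
\begin{equation*}
ur_\Gamma \le \frac{1}{\Gamma}\sum\nolimits_{i=1}^{\Gamma} ur_i \le \frac{R}{\Gamma}.
\end{equation*}
Summing over the $N-\Gamma$ elements of $MinSet_h$ then gives $ur_{MinSet} \le (N-\Gamma)\,R/\Gamma = (N-\Gamma)\,ur_{opt}$, which is a bound independent of the particular filling.

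To finish, I will show this bound is attained exactly by the equal-radius filling. Sufficiency: setting all radii equal to $ur_{opt}=R/\Gamma$ is feasible, because the top $\Gamma$ radii sum to $R$, and it yields $ur_{MinSet}=(N-\Gamma)\,ur_{opt}$. Necessity: for equality in the chain we need each $MinSet$ radius to equal $R/\Gamma$. This forces $ur_\Gamma \ge R/\Gamma$. Combined with the average bound, this forces $ur_\Gamma$ to equal the average of the top $\Gamma$ radii and that average to equal $R/\Gamma$. Since $ur_\Gamma$ is the minimum of the top $\Gamma$ values, all of them must then equal $R/\Gamma$.

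The main obstacle is really bookkeeping rather than mathematics. I must use the convention that $MinSet$ has $N-\Gamma$ elements; the paper's sum index starting at $\Gamma$ should be read as $\Gamma+1$. I must also handle the degenerate cases: $\Gamma=N$, where $MinSet$ is empty and the claim is trivial, and $R<0$, which is infeasible. The averaging step itself is elementary.
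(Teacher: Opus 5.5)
Your proof is correct and rests on the same two observations as the paper's proof: every $MinSet$ radius is capped by $ur_\Gamma$, and $ur_\Gamma$ is largest when all $MaxSet$ radii equal $(C-\sum uc_v)/\Gamma_N$. The paper argues by local improvement of an assumed optimum, whereas your chain $ur_i \le ur_\Gamma \le \frac{1}{\Gamma}\sum_{j\le\Gamma} ur_j \le R/\Gamma$ gives the same idea as a direct global bound with an explicit equality case, which also yields uniqueness when $N>\Gamma$.
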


Although the constant requirement on the center value is not fully representative, the lemmas reflect the intuition that to reduce the probabilistic overhead  $ur$ (equally to increase the number of placeable VMs), one should strive to place VMs of close radiuses together on the same host. Therefore, sorting VMs by radiuses and packing them into hosts with FirtstFit heuristic is a perspective approach. Based on this idea, to find a lower bound on the optimal number of VMs in the longest placeable prefix, we propose CloseRadiusLB (see Algorithm \ref{alg:CloseRadiusLB}).

In general, CloseRadiusLB finds an upper bound on the prefix length and uses binary search to define the maximal placeable prefix. In the procedures of the algorithm, rows 3-5 find a trivial upper bound on the prefix length \textit{high} by assuming zero radiuses of VMs. To find the length of the maximal placeable prefix $P_n$, we perform a binary search between $low=0$ and the obtained upper bound $high$ (rows 8-17). At each step, to check if the prefix is placeable, we sort VMs by radiuses in descending order and place them one by one with the FirstFit heuristic (rows 11-12). Finally, we have the lower bound on the optimal number of placeable VMs $lb$.

\begin{algorithm}[tb] 
    \caption{CloseRadiusLB}
    \label{alg:CloseRadiusLB}
    \begin{flushleft}
        \textbf{Input}: $Q$ is an infinitely long queue of VMs \\
        \hphantom{\textbf{Input}:} $H$: is a set of hosts \\
        \textbf{Output}: $lb$ is the lower bound on number of placed VMs in optimal solution
    \end{flushleft}
    \begin{algorithmic}[1] 
        \STATE $\textit{low}, \textit{high}, uc_{sum} \gets 0, 0, 0$
        \STATE $\textit{capacity} \gets |H|C$
        \WHILE{$uc_{sum} \le \textit{capacity}$}
        \STATE $uc_{sum} \gets uc_{sum} + uc_{high}$
        \STATE $high \gets high + 1$
        \ENDWHILE
        \STATE $lb \gets 0$
        \WHILE{$\textit{low} < high$}
        \STATE $n \gets (low + high)/ 2$
        \STATE $P_n \gets$ first $n$ VMs from $Q$
        \STATE Sort $P_n$ by radiuses in decreasing order
        \IF{$\textbf{FirstFit}(P_n, H)$}
        \STATE $lb \gets n$
        \STATE $low \gets n + 1$
        \ELSE
        \STATE $high \gets n - 1$
        \ENDIF
        \ENDWHILE
        \RETURN $lb$
    \end{algorithmic}
\end{algorithm}

\subsection{CloseRadiusFit heuristic for online \problem problem } \label{subsec:heuristic}
\begin{algorithm}
    \caption{CloseRadiusFit}
    \label{alg:CloseRadiysFit}
    \begin{flushleft}
        \textbf{Input}: $P_{i-1}$ is the set of already placed VMs \\
        \hphantom{\textbf{Input}:} $i$ is an index of the VM to place \\
        \hphantom{\textbf{Input}:} $|H|$ is a number of hosts \\
        \textbf{Output}: host where to place $i$-th VM 
    \end{flushleft}
    \begin{algorithmic}[1] 
        \STATE \textbf{Phase-1}: predict desired radius range for each host.
        \STATE Sort VMs in $P_{i-1}$ by $ur_i$ in descending order
        \STATE $cap \leftarrow \frac{1}{|H|}\sum_{i\in P_{i-1}}{uc_i}$ \label{alg:CloseRadiysFit:line_hac}
        \STATE $h_{rb}[\:] \leftarrow \text{a zero array of length } |H|$
        \STATE $p \leftarrow 0 $
        \FOR{$h \leftarrow 1$ to $|H|$}
        \STATE $req \leftarrow 0 $
        \WHILE{$req < cap$ and $ p < i$}
            \STATE $req \leftarrow req + uc_p$
            \STATE $p \leftarrow p + 1$
        \ENDWHILE
        \STATE $h_{rb}[h] \leftarrow ur_{p}$
        \ENDFOR
        \STATE \textbf{Phase-2}: use $h_{rb}$ values to find host for VM $i$.
        \STATE Find optimal host $p$: $h_{rb}[p-1] \le ur_i \le h_{rb}[p]$ 
        \FOR{$j \leftarrow p$ down to $1$ }
          \IF{if $i$-th VM fits into host $j$} \RETURN $j$ \ENDIF
        \ENDFOR
        \FOR{$j \leftarrow p$ to $|H|$ }
          \IF{if $i$-th VM  fits into host $j$} \RETURN $j$ \ENDIF
        \ENDFOR
        \RETURN "No available host"
    \end{algorithmic}
\end{algorithm}

In a real-world scenario, the sequence of upcoming VMs is unknown in advance.  Our objective is to design an online algorithm that effectively leverages the key observations from the offline CloseRadiusLB. To accomplish this, we introduce CloseRadiusFit that "imitates" CloseRadiusLB and strives to place VMs with close radius together. Using information of previously placed VMs, CloseRadiusFit predicts the preferable distribution of VMs to hosts by radiuses. At each iteration, to place an $i$-th VM in the host with the closest radius, CloseRadiusFit uses two phases (see Algorithm \ref{alg:CloseRadiysFit}). 

\textit{Phase-1}: We predict the radius bound, $h_{rb}$, for each host using information of previously placed VMs $P_{i-1}$ (rows 2-12). To perform the prediction of the bounds for $|H|$ hosts,  we simulate the placement of the VMs from prefix $P_{i-1}$ into hosts with reduced capacity. This reduction is necessary for accurate distribution estimation of VM radiuses over hosts.
At first, we calculate the reduced host capacity $cap$ that is an average capacity of cumulative centers from prefix $P_{i-1}$ (row \ref{alg:CloseRadiysFit:line_hac}). Then, for each $j$-th host, we estimate its radius bound $h_{rb}[j]$. To do this, we sort VMs from $P_{i-1}$ in descending order by radius and sequentially place them into a host with the new average capacity $cap$ until the host is full (rows 6-11). The radius of the last unplaced VM, $ur_p$, is taken as the upper bound for the $j$-th host (row 12). Based on the prediction, $j$-th host is expected to accommodate VMs $v$ with radiuses $ur_v \in [h_{rb}[p-1], h_{rb}[p]$.

\textit{Phase-2}: Place the newly arrived VM, denoted as VM $i$, into the host with the corresponding radius range. Specifically, we place the new VM in the host with the closest radius based on the computed bounds, $h_{rb}$ (row 15). If the VM does not fit into the host, we search across the other hosts to find one that is able to accommodate it (rows 16-25).

\section{VM placement in Cold start scenario} \label{sec:VM placement with untrusted predictions}
For the Cold start scenario, the utilization history of VMs in the queue is unavailable. To resolve this issue, we employ Gradient Boosting Decision Tree (GBDT) to predict interval values of VMs' utilized capacity based on about 20 available features of a new VM, such as flavor type, region, creation time, etc. The GBDT was chosen as a well-interpretable machine learning approach with successful applications to cloud problems \citep{cortez2017resource}. To fulfill the information gap between the cold start scenario to the hot start scenario, we design two strategies : 1) we place VMs based on the predicted range, and 2) we adjust the predicted range using utilization from VMs' recent activity. The second strategy mixes cold and hot starts, as we initially use the predicted range, and then adjust it with the recently measured utilization.

For the first strategy, to coincide with the format of $\Gamma$-robustness theory, we predict the utilization ranges of VMs. We employ GBDT to classify a VM into one of four groups having utilization ranges: 10-25\%, 10-50\%, 10-75\%, and 10-100\%. To avoid hotspots, we set an extra penalty for the underestimation of utilization in a loss function. We define $r_i = f(\boldsymbol{x_i})-y_i$ as the gap between the predicted and real values, where $f(\cdot)$ is the prediction function, the vector $\boldsymbol{x_i}$ is the input features of VM $i$, and $y_i$ is the real utilization range of VM $i$. To distinguish underestimated values from overestimated, we introduce indicating variables for sign of $r$:  $I_{r<0}$ and $I_{r\geq 0}$. Finally, we define the loss function for prediction as $L=\sum_{v\in V}(w\cdot I_{r<0}\cdot r_v^2+ I_{r\geq 0}\cdot r_v^2)$, where  $w$ is a penalty for underestimation.

As the prediction algorithm from the first strategy is rather conservative and tends to overestimate the VMs' utilization, we need to correct the prediction value for the VMs already running on the hosts. For this purpose, we developed a second strategy named Semi-cold start, where we employed an auto-correction scheme that modifies the predicted interval from the first strategy using the last $t_a$ minutes of historical data. $t_a$ is a hyper-parameter that should be tuned. If $t_a$ is small, the host is exposed to a high risk of hotspots if many VMs arrive in a short period of time. In contrast, if $t_a$ is large, the correction is ineffective. The estimation of the utilization range from historical data converges to actual future utilization according to \citet{cortez2017resource}. With accurate predictions of the utilized capacity interval of VMs, we better estimate the utilization of each host for $\Gamma$-robustness theory and improve the overall packing quality.

\section{Experimental results} \label{sec:experimental results}

\subsection{Datasets and experiment flow}

\begin{figure}
\centering
\includegraphics[width=\columnwidth]{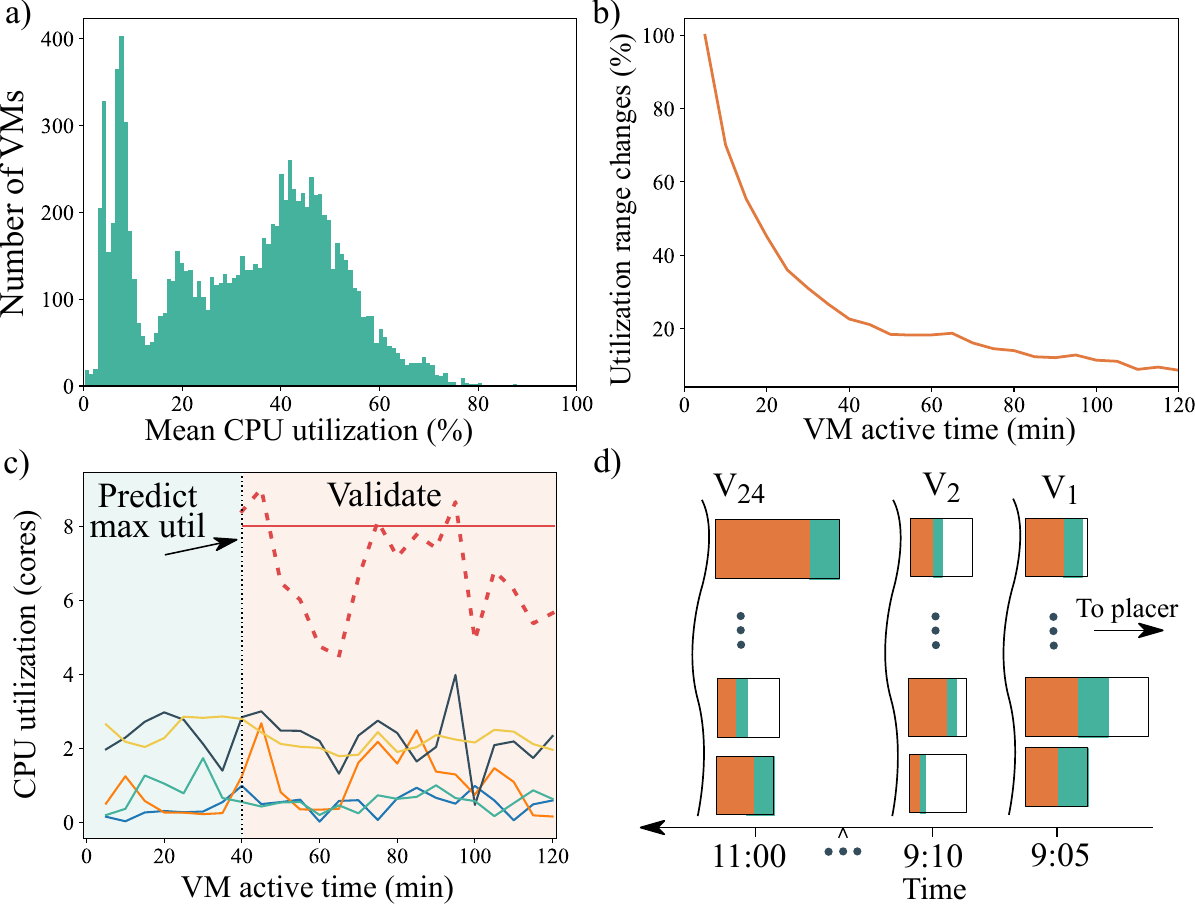}
\caption{a) Distribution of mean utilizations of VMs from the Huawei dataset. b) Illustrates which part of VMs from the dataset changes their utilization range over activation time. c) Demonstrates an example of host load and hotspots calculation using prediction range in green and validation range in red. d) Illustrates trace with arrival of VMs for Cold start scenario placement.}
\label{fig:Dataset_expflow}
\end{figure}

We evaluated our methods for solving the \problem problem using about 10,000 VMs launched from 9:00 AM to 11:00 AM in March 2022 in Huawei Cloud. The dataset consists of flavors with 1, 2, or 4 cores and comprises primarily short-lived VMs with an active time usually less than 24 hours. These VMs were specifically chosen for stress-testing purposes due to their high CPU utilization with a mean of 32\% (the distribution of means is given in Figure \ref{fig:Dataset_expflow}a), which is higher than the reported 15\% in the literature. The reason for this high utilization is attributed to the short lifetime of the VMs, whereas long-lived VMs typically exhibit lower and more steady utilization. Furthermore, the utilization range of the VMs changes over time in terms of minimum and maximum values. However, the majority of variation occurs within the first 40 minutes after the VM launch (as depicted in Figure \ref{fig:Dataset_expflow}b). Specifically, after 40  minutes, only 20\% of the VMs experience changes of utilization range, in contrast to 70\% after 10 minutes.

For the Hot start scenario, we used the initial $t_p$ timestamps of utilization data subsequent to VM launch to make predictions of VMs utilization range, whereas employing the remaining time series of the utilization to validate the prediction. Violation of Eq. \ref{eq:gamma_constraint} during the validation phase indicates the occurrence of a hotspot. Figure \ref{fig:Dataset_expflow}c provides an example for utilization of 5 VMs with $t_p=40$ minutes, where the host capacity is represented by the solid red line, while the real summed utilization is depicted by the dashed-red line. 

For the Cold start scenario, we provide a dataset with traces of VMs referred to as \textbf{HW Cloud-2022}. Each trace is constructed for a fixed number of hosts ranging from 2 to 100 and contains 24 groups $\{V_{i}\}_{i=1}^{24}$ of VMs. Each group of VMs arrives for placement within an interval of 5 minutes (see Figure \ref{fig:Dataset_expflow}d). Every 5 minutes, we measured the CPU utilization for all VMs in the cluster and introduced a new group of arriving VMs. The host capacity in experiments was fixed to $C=44$ cores.

\subsection{Evaluation of performance in Hot start scenario}

Our objective is to maximize the number of successfully placed VMs from the queue onto the hosts. To evaluate the performance of our algorithms and bounds, we examined the relationship between the number of placed VMs and the number of hosts. We conducted 100 experiments for each host number, as shown in Figure \ref{fig:HotStart_results}a. In these experiments, we set $t_p$ to 40  minutes and $\alpha$ to 0.05. The results indicate an almost linear dependence of all algorithms, where the inclination angle represents the number of placed VMs per host $V_{ph}$. It is worth noting that some deviation from perfect linearity persists in the region of 1-20 hosts. The values for $V_{ph}$ of algorithms and Gaps to bounds are presented in Table \ref{tab:Evaluation of the upper bound and lower bound}, where

\begin{equation}
    Gap_{X\to Y}:=\frac{OBJ(Y) - OBJ(X)}{OBJ(Y)} \cdot 100\%.
\label{eq:gap}
\end{equation}

Here, $OBJ(\cdot)$ denotes the number of placed VMs by algorithm $X$ or $Y$. The metric quantifies the performance difference between algorithms and bounds as a percentage gap.

Our PrefixUB and CloseRadiusLB algorithms provide narrow bounds on the optimal solution, with a $V_{ph}$ value of $38.7$ for PrefixUB and $38.1$ for CloseRadiusLB. The small gap of 1.5\% between the bounds indicates their high quality. Therefore, the range for the optimal solution in terms of $V_{ph}$ is $[38.1, 38.7]$ VMs per host.

We also performed a direct search for the optimal solution using our MILP model and the Gurobi solver \cite{gurobi}. The problem is solvable for small instances with about 5 hosts and demonstrates a gap of less than 0.1\% to the CloseRadiusLB. This further validates the high quality of the LB and indicates that the optimal solution should be in close proximity to the LB.

Placement of VMs with similar radiuses together allows to save more space than in other approaches because of the structure of the $\Gamma$-robust constraint (see Sections \ref{sec:gamma constraint} and \ref{subsec:lower bound}). That is why CloseRadiusFit places about 7\% more VMs than well-known heuristics such as FirstFit and RandomFit. Furthermore, the performance of CloseRadiusFit is close to that of LB and UB with gaps of 1.6\% and 3.1\%, respectively. Note that all the heuristics use $\Gamma$-robust constraints, and we purely compare heuristic performance, where the advantage of $\Gamma$-robust theory over the traditional flavor-based allocation will be discussed further below.

\begin{figure*}
\centering
\includegraphics[width=0.85\textwidth]{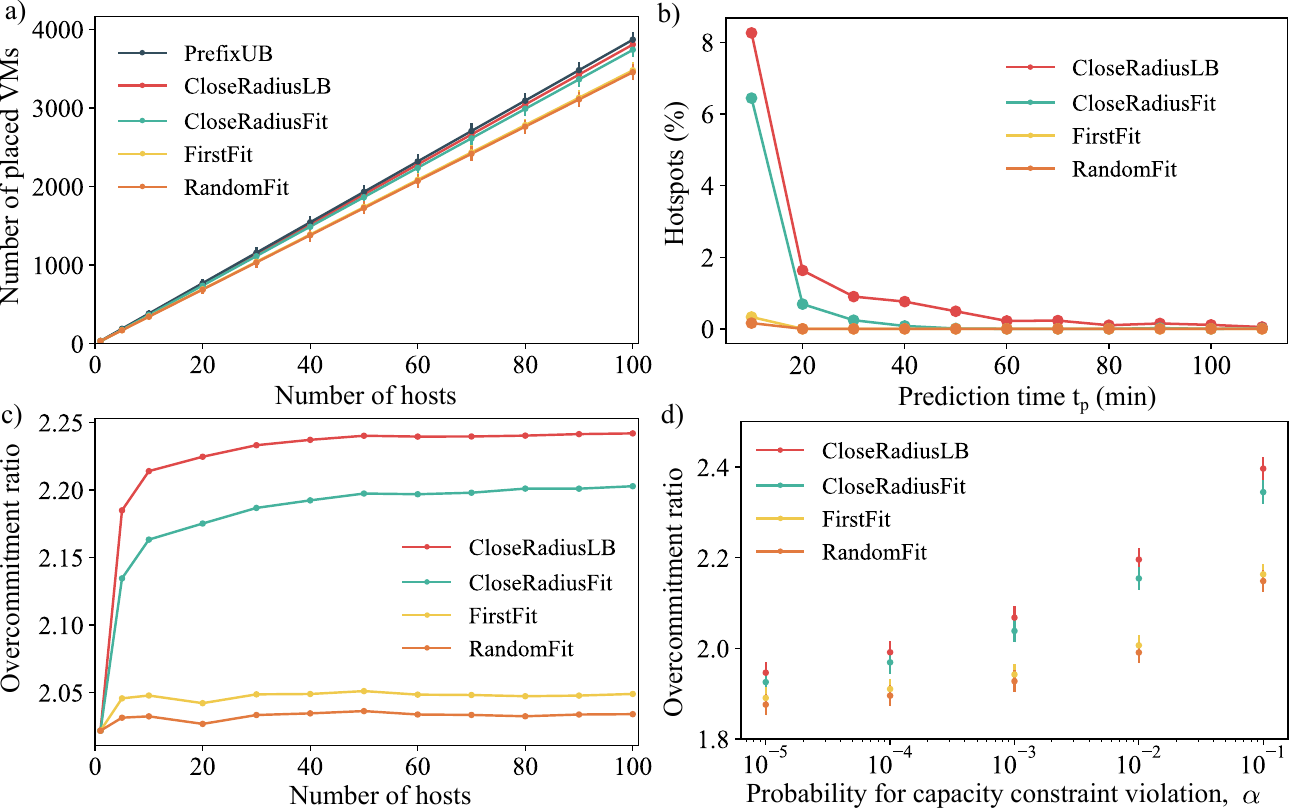}
\caption{Experiments for Hot start scenario with 100 experiments per point and $\alpha=0.05$ a) Dependence of mean number of placed VMs on the number of hosts for different heuristics and bounds. b) Dependence of the mean number of hotspots on prediction time $t_p$ for 50 hosts. c) Dependence of the mean overcommitment ratio on number of hosts. d) Dependence of the mean overcommitment ratio on the probability of capacity constraint violation $\alpha$ for 50 hosts.}
\label{fig:HotStart_results}
\end{figure*}

To evaluate the quality of our approach in addressing the under-utilization problem, we employ the metric of overcommitment ratio for a given host $h$:
\begin{equation}
    OR_h=\frac{\sum_{v \in V_h }f_v}{C},
\label{eq:OR}
\end{equation}
where $f_v$ is amount of cores in flavor of VM $v$ on host $h$. Experimental findings indicate an average OR of $2.0$, which allows to improve efficiency of space utilization 2 times under $\Gamma$-robust theory compared to flavor allocation approaches (see Figure \ref{fig:HotStart_results}c). Specifically, the CloseRadiusLB achieves an overcommitment ratio of 2.25, corresponding to a 125\% increase in mean resource utilization compared to flavor-based allocation. We observe variations in the average $OR$ for configurations with 1 to 10 hosts. This variability can be attributed to limited optimization opportunities in smaller setups and deviations from the linear relationship between the number of placed VMs and host count, as illustrated in Figure \ref{fig:HotStart_results}a. Notably, the average $OR$ depends on the level of guarantees $\alpha$: a larger $\alpha$ leads to smaller guarantees, and a higher $OR$ (see Figure \ref{fig:HotStart_results}d). The hotspots dependence on the prediction time $t_p$ (see Figure \ref{fig:HotStart_results}b) is well-correlated with utilization range change (see Figure \ref{fig:Dataset_expflow}b). These findings validate our decision to choose $t_p=40$ minutes, as it effectively mitigates hotspots and accurately predicts CPU utilization ranges.

\begin{table}[ht] 
\begin{center}
\caption{Overview of algorithms performance with $t_p=40$  minutes and $\alpha=0.05$}
\label{tab:Evaluation of the upper bound and lower bound}
\resizebox{\linewidth}{!}{
\begin{tabular}{llll}
\hline
Algorithm (X) & \begin{tabular}[x]{@{}c@{}}Placed VMs\\$V_{ph} \cdot |H|$\end{tabular}  & $Gap_{X\to LB}$ (\%) & $Gap_{X\to UB} $ (\%) \\
\hline
    PrefixUB   & $38.7\cdot|H|$ &  - & -\\
    CloseRadiusLB  & $38.1\cdot|H|$ & -     &  1.5\\
    CloseRadiusFit & $37.5\cdot|H|$  & 1.6      &  3.1  \\
    FirstFit & $34.8\cdot|H|$ &  8.7    & 10.1     \\
    RandomFit & $34.5\cdot|H|$ &  9.4    & 10.8   \\
\hline
\end{tabular}}
\end{center}
\end{table}

\subsection{Evaluation of performance in Cold start scenario}
For the cases where historical data of CPU utilization of VMs is unavailable, we employ machine learning to predict the ranges of the VMs consumption. We use the ranges for two scenarios of VMs placement: 1) Cold start, where the scheduler allocates VMs based on predicted values and 2) Semi-cold start, where in addition to predicted values, the scheduler adjusts the utilization range after $t_a$  minutes of VM active lifetime (see Section \ref{sec:VM placement with untrusted predictions}). Using the \textbf{HW Cloud-2022} dataset, we assessed VM placement performance. The number of placed VMs demonstrates a linear dependence on the number of hosts  similar to the Hot start scenario. However, the average number of placed VMs per host is less versus Hot start (see Tables \ref{tab:ColdStart} and \ref{tab:Evaluation of the upper bound and lower bound}).

\begin{table}[ht] 
\begin{center}
\caption{Overview of algorithms performance  under Cold and Semi-cold start assumptions at $t_a=40$ minutes and $\alpha=0.05$}
\label{tab:ColdStart}
\resizebox{\linewidth}{!}{
\begin{tabular}{lll}
\hline
Algorithm (X) &  \begin{tabular}[x]{@{}c@{}}Placed VMs\\$V_{ph} \cdot |H|$\end{tabular} & $Gap_{X\to LB}$ (\%)\\
\hline
    CloseRadiusFit(Semi-cold)  & $33.5\cdot|H|$ & 12.0 \\  
    FirstFit(Semi-cold) & $32.2\cdot|H|$ &  15.5\\
    RandomFit(Semi-cold) & $30.7\cdot|H|$ &  19.5  \\  
    CloseRadiusFit(Cold)  & $30.3\cdot|H|$ & 20.4 \\
    FirstFit(Cold) & $28.3\cdot|H|$ & 25.7\\
    RandomFit(Cold) & $28.1\cdot|H|$ & 26.3\\
\hline
\end{tabular}}
\end{center}
\end{table}

In the Cold start scenario, all heuristics showed approximately 20\% Gap to the Hot start solution as a result of the conservative prediction model (at fixed $t_a=40$  minutes and $\alpha=0.05$).  In the Semi-cold scenario, the Gap is improved to around 10\% due to the more accurate estimation of VMs' utilization range through the adjustment that takes place after $t_a$ minutes. 
The proportional changes of the result of all the algorithms from Cold start to Semi-cold start prove the robustness of our proposed approach. Among heuristics, CloseRadiusFit achieved the best performance towards the solution of online \problem problem, exhibiting a 12.0\% gap to the lower bound in Semi-cold start scenario, and just 1.6\% in the Hot start scenario. Note that the performance of the algorithm in Semi-cold start depends on the hyper-parameter of active time $t_a$.

To further evaluate the performance of the algorithms, we analyzed the average overcommitment ratio (OR) and the number of hotspots as functions of $t_a$ (see Figure \ref{fig:ColdStartOR_results}). In the range of $t_a \in [10, 35]$ minutes, we observed a dense packing due to the rapid update of initially conservative predictions of utilization ranges. However, since the update is based on a small number of data points, it tends to underestimate the true utilization range, resulting in the occurrence of hotspots (see Figure \ref{fig:ColdStartOR_results}b). Therefore, heuristics exhibited larger OR at the cost of larger hotspots in the range of $t_a \in [10,35]$ minutes (see Figure \ref{fig:ColdStartOR_results}). Moreover, we observed that in this range, our CloseRadiusFit algorithm does not significantly dominate the benchmark FirstFit algorithm, as the volatile utilization ranges prevent effective radius-based group packing. 
However, both CloseRadiusFit and FirstFit outperform RandomFit by employing a radius-based grouping. At each new timestamp, the utilization ranges of VMs are updated and the host loads are recalculated. The FirstFit and CloseRadiusFit heuristics are placing small VMs in the first hosts with less space. However, larger VMs cannot fit into the available space in the beginning and are consequently placed in the end of the host list. This results in a grouping pattern where VMs with small radiuses are concentrated at the beginning of the host list, whereas VMs with large radiuses are positioned at the end.
For the intermediate values of $t_a \in [35,60]$  minutes, the CloseRadiusFit exhibited obviously superior performance over the benchmark heuristics, since it takes advantage of well-estimated values of utilization range and placement by radius values. For the values of $t_a$ over 60 minutes, the algorithms converged to the pure Cold start result, where the solution is overly conservative. However, the CloseRadiusFit still outperforms the benchmark heuristics due to the meaningful prediction of the utilization ranges that allows close radius packing.

\begin{figure}
\centering
  \includegraphics[width=0.9\columnwidth]{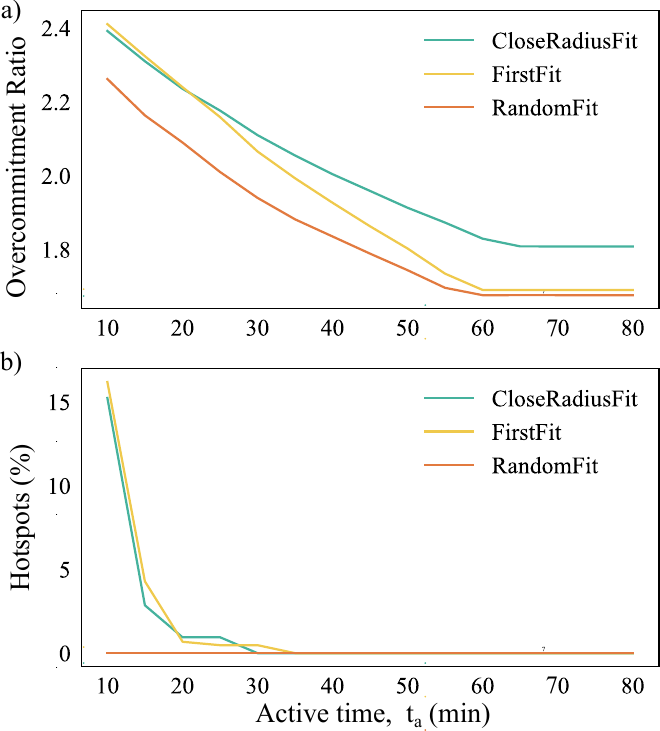}
\caption{Experiments for Semi-cold start scenario, where $\alpha=0.05$. a) Dependence of average overcommitment ratio on the active time $t_a$. b) Dependence of hotspots on active time $t_a$.}
\label{fig:ColdStartOR_results}
\end{figure}

\section{Conclusion} \label{sec:conclusion}
In this work, we define and formulate the \problem problem for VMs placement with dynamic CPU utilization, where the $\Gamma$-robustness theory is applied to manage the uncertainty. Our objective is to maximize the number of provisioned VMs into a fixed number of hosts, while maintaining a certain probability of not violating physical capacity constraints. In offline settings, we make key contributions through a novel MILP model and two high-quality bounds, i.e., PrefixUB and CloseRadiusLB. These methods lay the groundwork for online heuristic design. In online settings, we investigate three regimes of operations: Hot start, Cold start and Semi-cold start. Comprehensive experiments on Huawei's real trace demonstrate that our CloseRadiusFit performs well in all the scenarios and significantly advances the classical online algorithms, with gaps to upper bound on the optimal solution of 3.1\% in Hot start, 13.5\% in Semi-cold start, and 21.5\% in Cold start. These gains directly enable high VM packing density in real data centers and benefit the cloud service providers.

Regarding future research, we intend to examine the impact of other resource dimensions, such as memory, disk, and network. This will transform the problem into a variant of the probabilistic multi-dimensional bin packing problem, and by taking into account the interrelated influence of these different dimensions, the problem becomes more complex. Besides, more advanced predictive models tailored to this problem could further enhance the performance of Cold start and Semi-cold start scenarios. Additionally, the applications beyond VM placement, such as container orchestration, present additional domains to apply and evaluate our techniques.

\section{Data availability} \label{sec:dataset}
The datasets used in this article are available by the link \url{https://github.com/huaweicloud/HUAWEICloudPublicDataset}. The code with algorithms and bounds is available by the link \url{https://github.com/andreigudkov/CloseRadiusFit}.

\bibliographystyle{IEEEtranN}
\bibliography{refs}

\begin{IEEEbiography}
[{\includegraphics[width=1in,height=1.25in,clip,keepaspectratio]{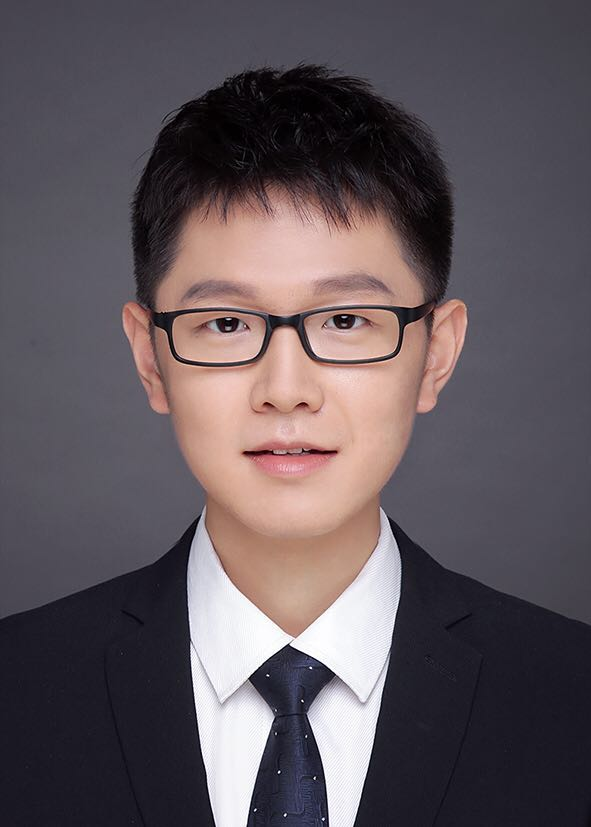}}]{Jiaxi Wu}
received his B.S. degree in Management Science from Tianjin University, Tianjin, China, in 2016, and the Ph.D degree in Industrial Engineering and Management from Peking University, Beijing, China, in 2021. He then worked as a postdoctoral researcher at Huawei Cloud, where he gained valuable industry experience in cloud computing optimization. He is currently a data scientist and architect at China Mobile. His research is devoted to the theory and development of large-scale optimization algorithms, robust optimization and data science, with applications in the fields of intelligent manufacturing, cloud computing and big data systems. 
\end{IEEEbiography}
\vskip -1\baselineskip plus -1fil
\begin{IEEEbiography}[{\includegraphics[width=1in,height=1.25in,clip,keepaspectratio]{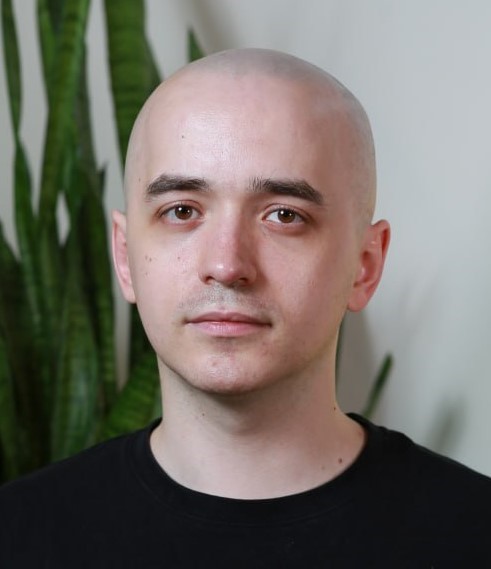}}]{Pavel Popov}
received his M.S. degree in Mathematics from the National Research University Higher School of Economics, Moscow, in 2015. He has also completed the Yandex School of Data Analysis in 2022. Previously, he conducted research in Algebraic Geometry, focusing on the study of cubic hypersurfaces. Currently, he is working in the field of Combinatorial Optimization, specifically on different Cloud Scheduling Problems. His research aims to develop efficient algorithms for resource allocation in cloud computing systems.
\end{IEEEbiography}
\vskip -1\baselineskip plus -1fil
\begin{IEEEbiography}
[{\includegraphics[width=1in,height=1.25in,clip,keepaspectratio]{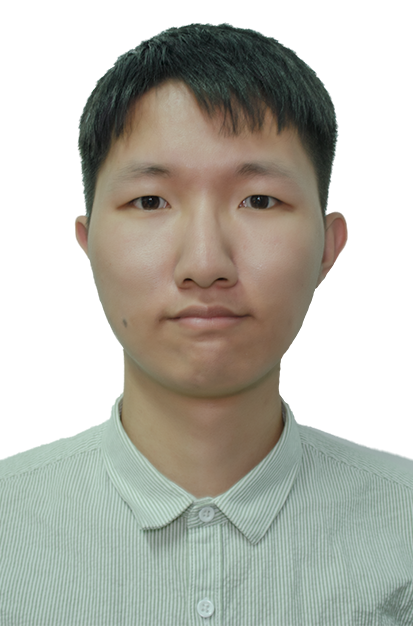}}]{Wenquan Yang}
received his B.S. degree in Computer Science and Technology from Northeastern University, Shenyang, China, in 2021. He is currently an algorithm engineer in Huawei Cloud Computing. His research interest includes the theory of machine learning and deep learning, with applications in user portrait and resource allocation. 
\end{IEEEbiography}
\vskip -1\baselineskip plus -1fil
\begin{IEEEbiography}[{\includegraphics[width=1in,height=1.25in,clip,keepaspectratio]{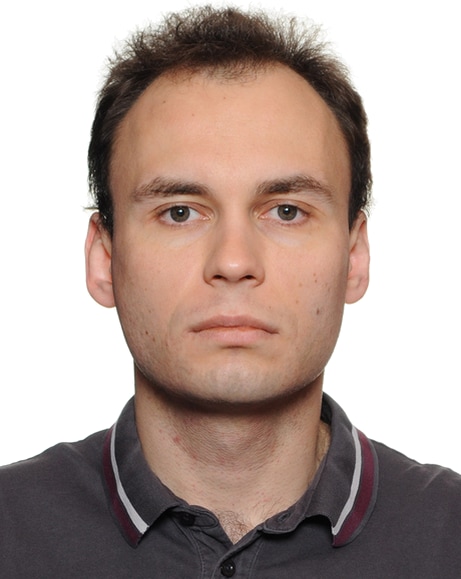}}]{Andrei Gudkov}
received his M.Sc. degree from the faculty of Computational Mathematics and Cybernetics of Moscow State University, Russia, in 2009. He held multiple engineering positions in hi-tech companies specializing in the areas of full-text search engines, distributed computing and big data. Currently he is employed at Mathematical Modeling Lab of Huawei Moscow Research Center, focusing on resource usage optimization in Huawei Cloud. Possessing deep experience in computing, he models cloud environment at all levels and designs practical scheduling algorithms.
\end{IEEEbiography}
\vskip -1\baselineskip plus -1fil

\begin{IEEEbiography}[{\includegraphics[width=1in,height=1.25in,clip,keepaspectratio]{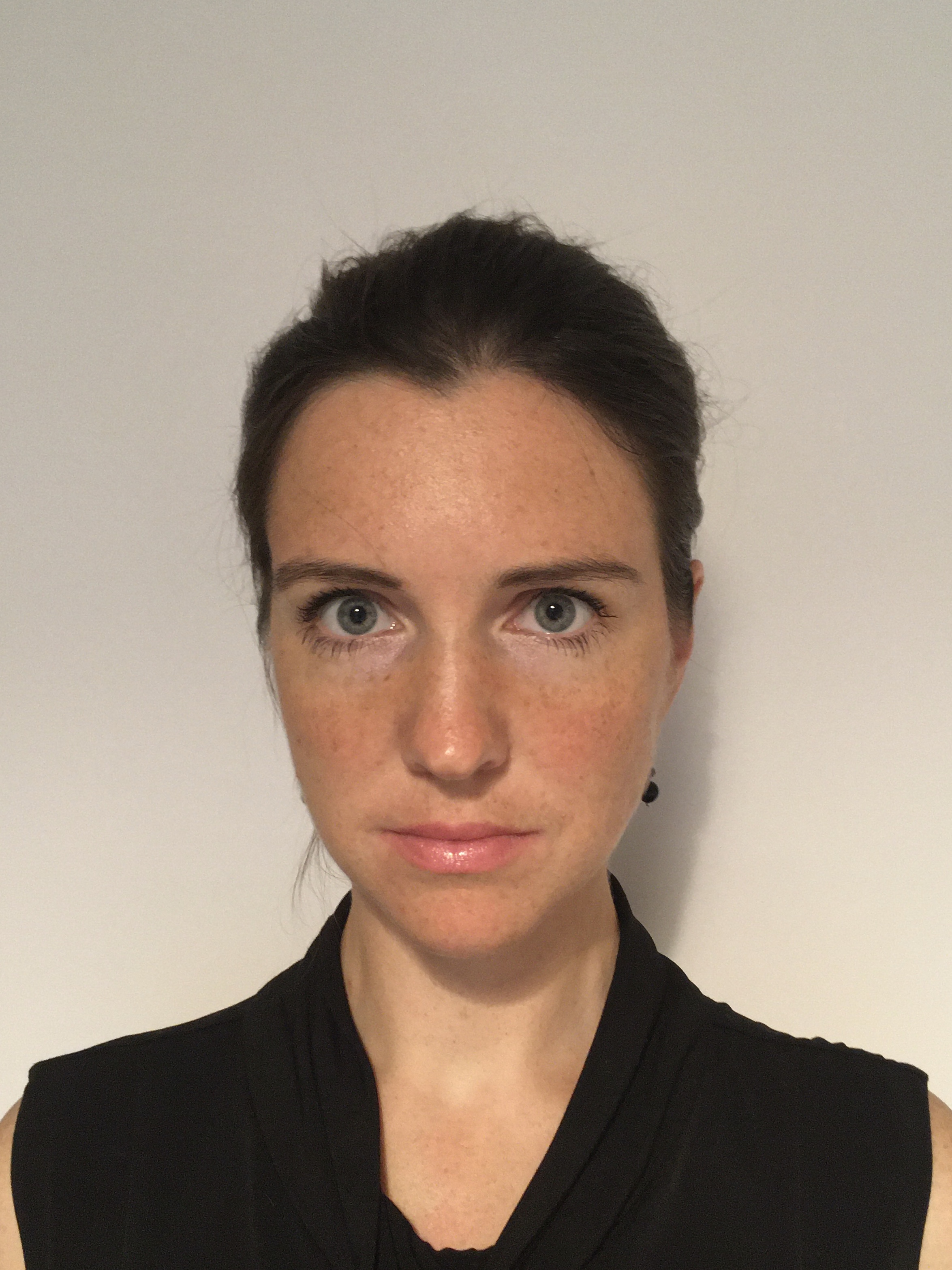}}]{Elizaveta Ponomareva}
received her Ph.D. in Mathematics from Moscow State University in 2016. She also graduated from Yandex School of Data Analysis. During her time at the university, she conducted research in Algebraic Geometry and Invariant Theory. Afterward, Dr. Ponomareva worked as a data analyst at a high-tech company. Currently, her research is focused on resource allocation in cloud computing systems.

\end{IEEEbiography}
\vskip -1\baselineskip plus -1fil
\begin{IEEEbiography}
[{\includegraphics[width=1in,height=1.25in,clip,keepaspectratio]{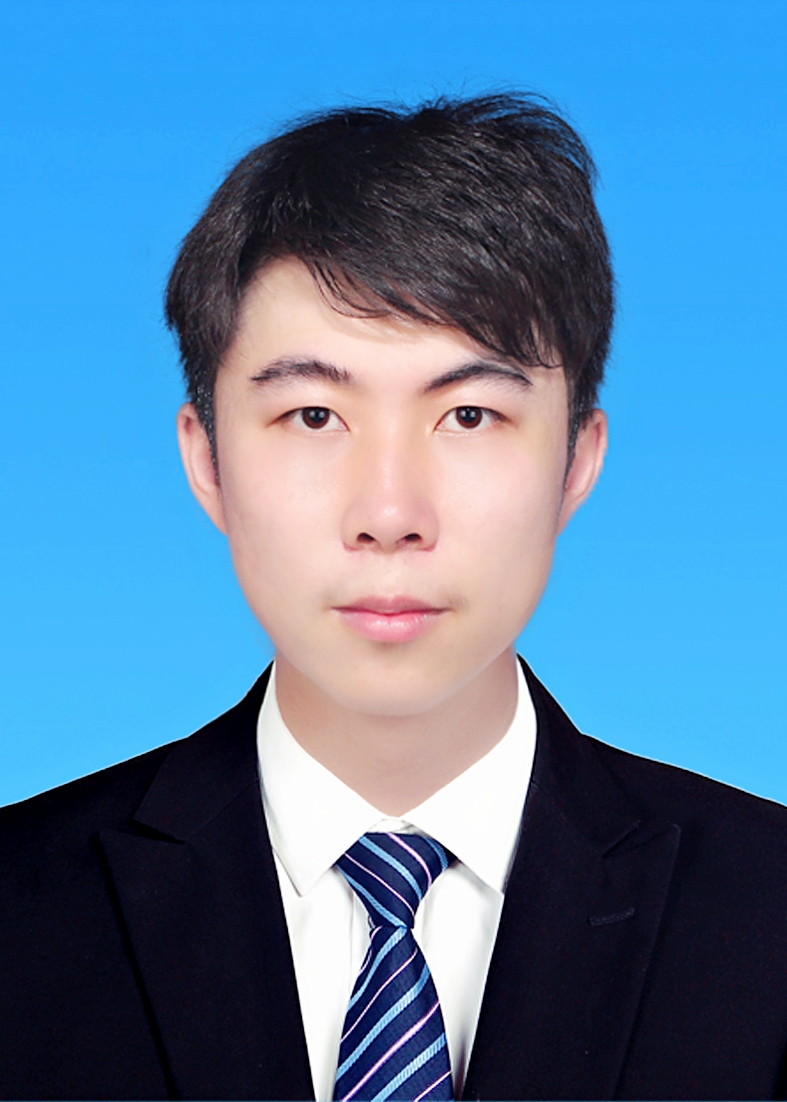}}]{Xinming Han}
received the B.S. degrees in industrial engineering from Southwest Jiaotong University, Chengdu, China, in 2020. Currently he is a Ph.D. candidate at the Department of Industrial Engineering and Management, Peking University, Beijing, China. His research interests focus on cloud resource scheduling and optimization.
\end{IEEEbiography}

\vskip -1\baselineskip plus -1fil
\begin{IEEEbiography}
[{\includegraphics[width=1in,height=1.25in,clip,keepaspectratio]{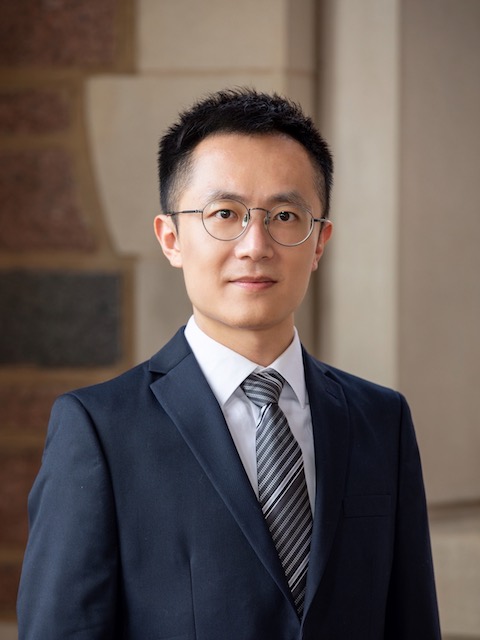}}]{Yunzhe Qiu}
Dr. Yunzhe Qiu is currently an assistant professor in the Department of Information Management at Peking University. He received a Ph.D. degree in Supply Chain and Operations Technology from Olin Business School, Washington University in St. Louis in 2022, a Master degree in Industrial Engineering and Management from Peking University in 2016, and a B.S. degree in Industrial Engineering from Tsinghua University in 2013. His research interests include Interface of Finance, Operations and Risk Management (iFORM), supply chain management, data-driven operations management, dynamic program, and simulation optimization. He won the “Best Practice” at the 14th International Annual Conference of the CSAMSE in 2022, the Honorable Mention Award by IISE in 2020, and the Best Paper Award from RAS at IEEE CASE in 2014. 
\end{IEEEbiography}
\vskip -1\baselineskip plus -1fil
\begin{IEEEbiography}
[{\includegraphics[width=1in,height=1.25in,clip,keepaspectratio]{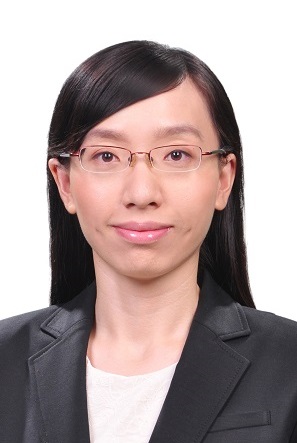}}]{Jie Song}
received the B.S. degree in applied mathematics from Peking University, Beijing, China, in 2004, and the M.S. and Ph.D. degree in industrial engineering from Tsinghua University, Beijing, in 2007 and 2010, respectively. She is an Professor with the Department of Industrial Engineering and Management, Peking University. Her research interests are simulation optimization, stochastic modeling in the application areas of logistics, healthcare and production and cloud computing.
\end{IEEEbiography}

\vskip -1\baselineskip plus -1fil
\begin{IEEEbiography}[{\includegraphics[width=1in,height=1.25in,clip,keepaspectratio]{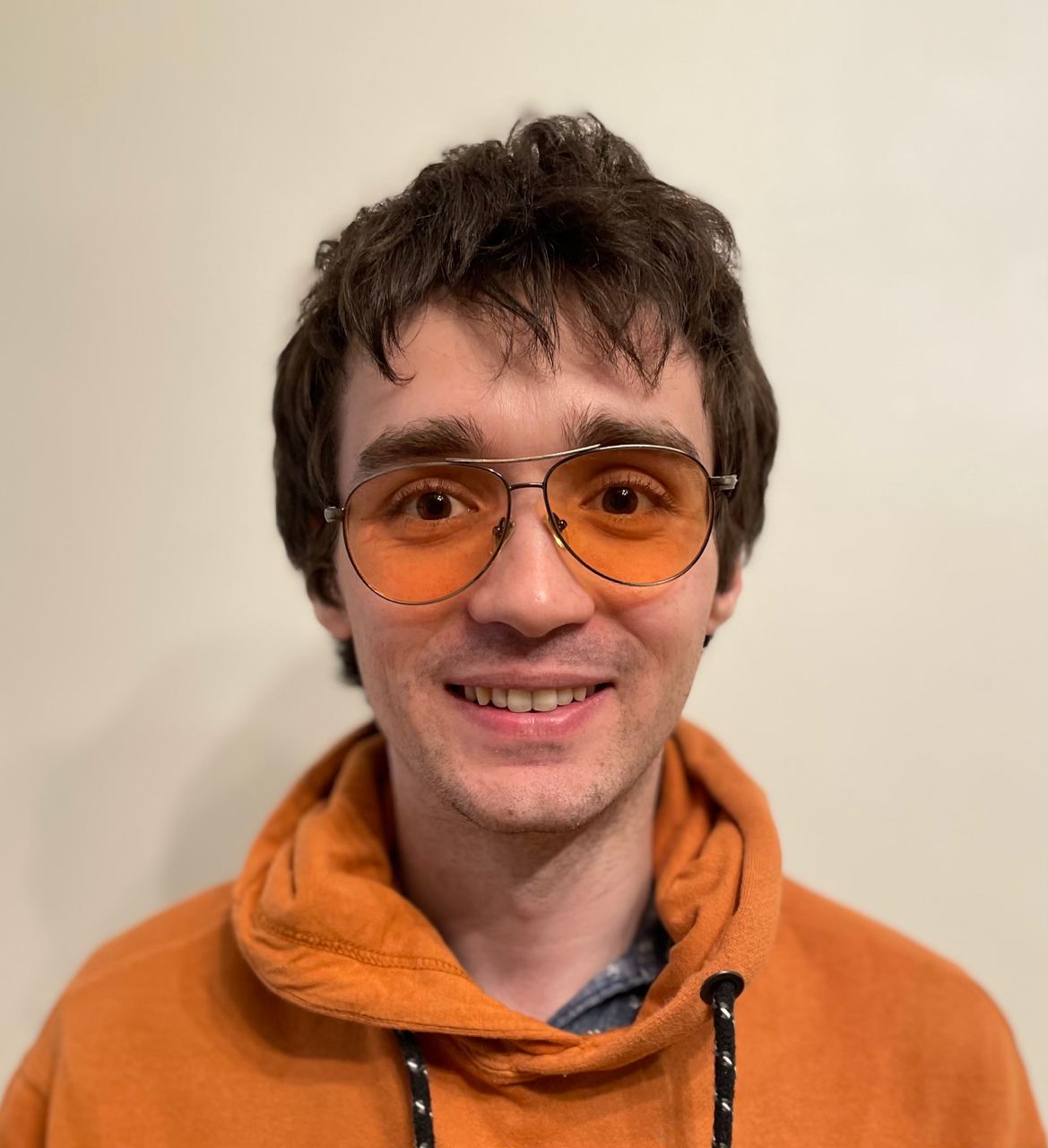}}]{Stepan Romanov}
holds both a B.S. and an M.S. in computer science from the Moscow Institute of Physics and Technologies. He further pursued his academic passions and obtained an MS and a Ph.D. in physics from the Skolkovo Institute of Science and Technologies. Dr. Romanov is a multidisciplinary scholar with a keen interest in diverse areas such as computational chemistry, experimental physics, robotics, and computer science. Currently, he is actively engaged in researching and tackling challenges in cloud computing systems.
\end{IEEEbiography}
\clearpage
\appendix
\input{AppendixClean}

\end{document}

%% file: AppendixClean.tex
\title{Appendix for article: Hotspot-Aware Scheduling of Virtual Machines with Overcommitment for Ultimate Utilization in Cloud Datacenters}

\author{Jiaxi Wu, Pavel Popov, Wenquan Yang, Andrei Gudkov, Elizaveta Ponomareva, Xinming Han, Yunzhe Qiu, Jie Song, Stepan Romanov 
\IEEEcompsocitemizethanks{\IEEEcompsocthanksitem Jiaxi Wu, Pavel Popov, Wenquan Yang, and Stepan Romanov are from Huawei Technologies Company Ltd  \protect\\
E-mail: stepan.romanov@skolkovotech.ru \protect\\
Xinming Han, Yunzhe Qiu, Jie Song are from Peking university.
}
}

\maketitle

\IEEEdisplaynontitleabstractindextext
\IEEEpeerreviewmaketitle
\appendices

\section{Proof of Theorem \ref{thm:minimal_s_fordominance}}
\label{thm:proof_minimal_s_fordominance}
To prove the theorem we rely on two facts:
 \begin{enumerate}
          \item For any symmetric $\hat{v} \succeq u $ with center $\hat{v}c=uc+s_v/2$ it holds that $F_u(x) \ge F_{sym(u)}(x-s_v)$ \item For any $s_v$ \textit{s.t.} $F_u(x) \ge F_{sym(u)}(x-s_v)$ there exists a symmetric $\hat{v} \succeq u$ with center $\hat{v}c= uc + s_v/2$  
\end{enumerate}
From these two facts and requirement on minimal value of $s$ \textit{s.t.} $F_u(x) \ge F_{sym(u)}(x-s) $ the theorem is immediately proven.   
To prove fact 1 we start with Def. \ref{def:stochastic_dominant} of stochastic domination of $\hat{v}$ 
\begin{equation}
    F_{\hat{v}}(x) \le F_u(x).
\label{eq:dominance}
\end{equation}
Applying reflection of both sides around point 
$(2\cdot \hat{v}c, 0.5)$ we have 
\begin{equation}
    F_{\hat{v}}(x) \ge 1-F_{u}(2\cdot \hat{v}c - x).
\label{eq:sym_fact1_2}
\end{equation}
Noting that $\hat{v}c = uc + s_v/2$ and inserting it in Eq. \ref{eq:sym_fact1_2} we obtain
\begin{equation}
    F_{\hat{v}}(x) \ge 1-F_{u}(2\cdot uc + s_v - x) = F_{sym(u)}(x-s_v).
\label{eq:fact1_int}
\end{equation}
Combining Eqs. \ref{eq:dominance} and \ref{eq:fact1_int} we obtain final inequality that we tend to prove
\begin{equation}
    F_u(x) \ge F_{sym(u)}(x-s_v).
\end{equation}

To prove fact 2 we just introduce such symmetric $\hat{v}$ with center $uc + s_v/2$
\begin{equation}
F_{v}(x) = 
\begin{cases}
    F_{sym(u)}(x-s_v), x < uc + s_v/2 \\
    F_{u}(x), x \ge uc + s_v/2 
\end{cases}
\end{equation}

\section{Proof of Theorem \ref{thm:overhead2}}
\label{appendix: proof of overhead}
Given a queue $Q$, consider VMs in $Q$ sorted by radiuses in descending order.  Denote by $P_i$ the prefix of length $i$. 
Let us call the $MaxSet(\mu)$ the union over all hosts $MaxSet_h(\mu)$ in mapping $\mu$.
Our goal is to find the lower bound $ur_{lb}(P_i)$ on probabilistic overhead, that estimates  $MaxSet(\mu)$ in any mapping $\mu$
\begin{equation}
\label{eq:maxset_goal}
    \sum\limits_{v \in MaxSet(\mu)} ur_v \ge ur_{lb}(P_i) \quad \forall \mu:P_i \rightarrow H
\end{equation}

We are aimed to propose iterative procedure towards calculation of $ur_{lb}(P_i)$. 
Starting with $ur_{lb}(V_0)=0$ we compute all $ur_{lb}(P_i)$ values using forward induction.
Suppose that $ur_{lb}(P_i)$ has already been computed.
There are two cases for changing of $\Gamma^{lb}$ and calculations for $ur_{lb}$ after adding a new VM: 
\begin{enumerate}
  \item If $\Gamma^{lb}(P_{i+1})\!=\!\Gamma^{lb}(P_{i})$ \\
        then $ur_{lb}(P_{i+1})= ur_{lb}(P_{i})$ 
  \item If $\Gamma^{lb}(P_{i+1})\!\geq\!\Gamma^{lb}(P_{i})\!+\!1$\\
  then $ur_{lb}(P_{i+1})\!=\!ur_{lb}(P_{i})\!+\!ur_{i+1}$. 
\end{enumerate}

To prove that iterative procedure satisfies Eq. \ref{eq:maxset_goal} let us introduce a few definitions that we will use throughout the proof. 
\begin{definition}
\label{app:def_exeeds}
Let $V$ and $W$ be two sequences of VMs sorted by radiuses in non-increasing order. Denote by $vr_i, wr_i$ radiuses of the corresponding VMs in the sets. We say that $V$ \emph{dominates} $W$  ($V \succeq W$) if
\begin{enumerate}
    \item $|V| \geq |W|$
    \item $vr_i \geq wr_i, \forall i \leqslant |W|$
\end{enumerate}
\end{definition}

\begin{properties} If $V \succeq W$ the following obvious properties holds:
\label{app:prop}
\begin{enumerate}
    \item $\sum\limits_{v \in V} ur_v \ge \sum\limits_{w \in W} ur_w$
    \item $V \cup v \succeq W \quad \forall v$ 
    \item If $ur_{v} \ge ur_w$, then $V \cup v \succeq W \cup w$
\end{enumerate}
\end{properties}

To satisfy Eq. \ref{eq:maxset_goal} we search for $S$ that is dominated by $MaxSet(\mu)$, \textit{i.e.}  $\sum\nolimits_{v \in MaxSet(\mu)} ur_v \ge \sum\nolimits_{v \in S} ur_v $. Our iterative procedure for calculation of $ur_{lb}(P_n)$ convert to procedure of $S$ construction: 

\begin{enumerate}
  \item $\Gamma^{lb}(P_{0}) = 0$. Set $S(P_0) = \emptyset$
  \item If $\Gamma^{lb}(P_{i+1})\!=\!\Gamma^{lb}(P_{i})$.
  Set $S(P_{i+1})= S(P_{i})$. 
  
  \item If $\Gamma^{lb}(P_{i+1})\!\geq\!\Gamma^{lb}(P_{i})\!+\!1$. 
  Set $S(P_{i+1}) = S(P_{i}) \cup v_{i+1}$. 

\end{enumerate}

The sum of radiuses of $S(P_{i})$ is equal to $r_{lb}(P_i)$. The number of elements $|S(P_{i})|$ is not greater than the lower bound $\Gamma^{lb}(P_{i})$. To finish the proof of the theorem, it is remained to prove: 

\begin{lemma}
$MaxSet(\mu) \succeq S(P_i)$ for any $P_i$ and mapping $\mu$.
\end{lemma}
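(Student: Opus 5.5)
The plan is to verify the two conditions of Definition~\ref{app:def_exeeds} directly, using the explicit description of $S(P_i)$. Fix a mapping $\mu:P_i\to H$. Let $j_1<j_2<\dots<j_m$ be the indices $j\le i$ at which $\Gamma^{lb}$ strictly increases, so that $S(P_i)=\{v_{j_1},\dots,v_{j_m}\}$. Since $P_i$ is sorted by radii in non-increasing order, $S(P_i)$ listed in this order is also non-increasing in radius. Its $k$-th element is $v_{j_k}$ with radius $ur_{j_k}$. Because $\Gamma^{lb}(P_0)=0$ and every increment is at least one, we have $\Gamma^{lb}(P_{j_k})\ge k$.

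\textbf{Key claim.} For every $k\le m$, $MaxSet(\mu)$ contains at least $k$ VMs with radius $\ge ur_{j_k}$. This gives both conditions at once. Taking $k=m$ shows $|MaxSet(\mu)|\ge |S(P_i)|$. Moreover, the $k$-th largest radius in $MaxSet(\mu)$ is then at least $ur_{j_k}$, which is exactly condition 2 of the definition.

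\textbf{Proof of the claim.} Consider the sub-prefix $P_{j_k}$; every VM in it has radius $\ge ur_{j_k}$.
\begin{itemize}
\item For a host $h$, let $N_h=|V_h|$ under $\mu$ and let $m_h$ be the number of VMs of $P_{j_k}$ placed on $h$.
\item $MaxSet_h$ consists of the $\Gamma(N_h)$ VMs of largest radius on $h$. The VMs of $P_{j_k}$ on $h$ are the $m_h$ largest-radius VMs there, up to ties, which can be resolved in favour of $P_{j_k}$ without changing any radius values.
\item Hence $MaxSet_h$ contains at least $\min(\Gamma(N_h),m_h)$ VMs of radius $\ge ur_{j_k}$.
\item Use that $\Gamma$ is non-decreasing in $N$ and $\Gamma(N)\le N$, as visible in Figure~\ref{fig:GammaExample}a and inherited from the definition of $B(N,\Gamma)$. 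Since $m_h\le N_h$, this gives $\min(\Gamma(N_h),m_h)\ge\Gamma(m_h)\ge\widetilde\Gamma(m_h)$.
\end{itemize}
Summing over hosts, the number of VMs of $MaxSet(\mu)$ with radius $\ge ur_{j_k}$ is at least $\sum_h\widetilde\Gamma(m_h)$.

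\textbf{Closing the bound.} The restriction $\mu|_{P_{j_k}}$ is itself a feasible mapping of $P_{j_k}$: removing VMs cannot increase a host's center sum, and that is the only quantity Procedure~\ref{alg:GammaLB} uses. Therefore the majorization argument of Section~\ref{subsec:GammaLB} applies, via the Hardy–Littlewood–Pólya inequality with the packing $\{d_i\}$, and yields $\sum_h\widetilde\Gamma(m_h)\ge \sum_i\widetilde\Gamma(d_i)$. The left-hand count is an integer, so we may take the ceiling and obtain a count of at least $\Gamma^{lb}(P_{j_k})\ge k$. This proves the claim.

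\textbf{Main obstacle.} I expect the delicate step to be the per-host inequality $\min(\Gamma(N_h),m_h)\ge\Gamma(m_h)$. It needs monotonicity of $\Gamma$ in $N$ for fixed $\alpha$, which the paper only exhibits numerically. I would first try to justify it from the monotonicity of $B(N,\Gamma)$ in both arguments (Appendix~\ref{appendix: computation of Gamma and widetilde Gamma}). If that fails, I would replace $\Gamma$ by its concave minorant $\widetilde\Gamma$ throughout, provided $\widetilde\Gamma$ is non-decreasing. A secondary point is handling ties in radii, and the fact that GammaLB's majorization argument must be applied to arbitrary sub-prefixes; I would state both explicitly.
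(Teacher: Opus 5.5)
Your proof is correct, under the same standing assumptions the paper uses, but it follows a genuinely different route.

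\textbf{How the two proofs differ.} The paper inducts on $i$ and follows the construction of $S$ step by step. It observes that appending the next, smallest-radius VM either leaves $MaxSet$ unchanged or adds one VM to it. The delicate sub-case is when $\Gamma^{lb}$ jumps but $MaxSet(\mu)=MaxSet(\mu')$. There it uses the cardinality bound $|MaxSet(\mu')|\ge\Gamma^{lb}(P_{i+1})\ge|S(P_i)|+1$ to set aside the smallest element of $MaxSet(\mu')$ and match it against $v_{i+1}$.

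You avoid induction. You replace the domination relation by one threshold count per element $v_{j_k}$ of $S$. You certify the $k$-th count by restricting $\mu$ to the radius-prefix $P_{j_k}$. Your per-host inequality $\min(\Gamma(N_h),m_h)\ge\Gamma(m_h)$ says exactly that $MaxSet(\mu)\cap P_{j_k}$ is at least as large as $MaxSet(\mu|_{P_{j_k}})$. The GammaLB/Hardy--Littlewood--P\'olya bound for the sub-prefix then finishes the count.

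\textbf{What each route buys.} Yours has no case analysis. It needs only that $\Gamma$ is non-decreasing with $\Gamma(N)\le N$. The paper's ``equal or extended by one VM'' step silently assumes at least this much, so the obstacle you flagged is shared, not extra. You also handle the ceiling cleanly by using integrality of the count. By contrast, the paper's $\lceil\sum\widetilde\Gamma(d_i)\rceil\le\sum\widetilde\Gamma(a_i)$ is only valid once the right-hand side is replaced by an integer count such as $\sum\Gamma(a_i)$. The paper's route, in turn, mirrors the incremental computation of $ur_{lb}$ in PrefixUB one VM at a time.

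\textbf{A point to make explicit.} Your step ``every increment is at least one, so $\Gamma^{lb}(P_{j_k})\ge k$'' uses two facts. The first is integrality of $\Gamma^{lb}$: that comes from the ceiling in Section~\ref{subsec:GammaLB}, whereas Procedure~\ref{alg:GammaLB} itself returns the unrounded sum. The second is that $\Gamma^{lb}$ never decreases along the prefix chain. Without the second, a drop followed by a rise adds an element to $S$ with no net gain in $\Gamma^{lb}$.

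The paper assumes monotonicity too. Its case split never considers $\Gamma^{lb}(P_{i+1})<\Gamma^{lb}(P_i)$, and it asserts $|S(P_i)|\le\Gamma^{lb}(P_i)$ without proof. Monotonicity of the next-fit packing $\{d_k\}$ under insertion of a VM is not obvious.

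Your argument has an advantage here: it goes through verbatim if $S$ is built from record values of $\Gamma^{lb}$, because then $\Gamma^{lb}(P_{j_k})\ge k$ holds by construction.
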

\begin{proof}
The proof is by induction. The base is obvious. Suppose that $S(P_{i})$ is dominated by $MaxSet(\mu)$. 
Consider any feasible assignment $\mu\colon P_{i+1}\rightarrow H$ and corresponding reduced assignment $\mu'\colon P_{i} \rightarrow H$. $MaxSet(\mu)$ is either equal to $MaxSet(\mu')$ or is extended by one VM $v$. 
It implies that $MaxSet(\mu) \succeq MaxSet(\mu')$ because of Property 1.2.
We need to show that $MaxSet(\mu) \succeq S(P_{i+1})$.
Two transition cases are needed  to be considered. 

\subsubsection*{Case 1 \texorpdfstring{$\Gamma^{lb}(P_{i+1})\!=\!\Gamma^{lb}(P_{i})$.}{Lg}} 
It implies that $S(P_{i+1}) = S(P_{i})$. 
Combining  $MaxSet(\mu) \succeq MaxSet(\mu')$ with the induction step $MaxSet(\mu')\succeq S(P_{i})$, we finally prove that  $MaxSet(\mu) \succeq S(P_{i})=S(P_{i+1})$. 

\subsubsection*{Case 2 \texorpdfstring{$\Gamma^{lb}(P_{i+1})\!\geq\!\Gamma^{lb}(P_{i})\!+\!1$.}{Lg}}
We have  $S(P_{i+1}) = S(P_{i}) \cup v_{i+1}$ and need to consider two cases.
\begin{enumerate}
    \item 
    \begin{equation}
    \label{eq:app:theorem2_init}
        MaxSet(\mu)= MaxSet(\mu') \cup w,
    \end{equation}
    where $w$ is either a new arrived VM $v_{i+1}$ or a previously placed VM from $MinSet(\mu')$. Using the induction step $MaxSet(\mu')\succeq S(P_{i})$ and Property \ref{app:prop}.3 we have 
    \begin{equation}
    \label{eq:app:theorem2_1}
        MaxSet(\mu') \cup w \succeq S(P_{i}) \cup v_{i+1} ,
    \end{equation}
    because VMs are sorted ($w \ge v_{i+1} $). Using 3rd property of $S$ construction $S(P_{i+1}) = S(P_{i}) \cup v_{i+1}$ we substitute it to Eq. \ref{eq:app:theorem2_1} and have $MaxSet(\mu') \cup w \succeq S(P_{i+1})$, and using Eq. \ref{eq:app:theorem2_init} we finally prove that $MaxSet(\mu) \succeq S(P_{i+1})$.

    \item
    \begin{equation}
    \label{eq:app:theorem2c2_n1}
       MaxSet(\mu) = MaxSet(\mu')
    \end{equation}
    We decompose $ MaxSet(\mu')$ as $X \cup v$, where $v$ is the smallest VM from $MaxSet(\mu')$, combining it with induction step $MaxSet(\mu')\succeq S(P_{i})$ we have
    \begin{equation}
    \label{eq:app:theorem2c2_n2}
       MaxSet(\mu) = MaxSet(\mu') = X \cup v  \succeq S(P_{i}).
    \end{equation}   
    We want using Definition \ref{app:def_exeeds} and Eq. \ref{eq:app:theorem2c2_n2} prove that 
    \begin{equation}
    \label{eq:app:theorem2c2_n3}
       X \succeq S(P_{i}).
    \end{equation}   
    From properties of $\Gamma^{lb}$ and Eq. \ref{eq:app:theorem2c2_n1} we have, $\Gamma^{lb}(P_i) \le |S(P_i)|$ and $|MaxSet(\mu')| \ge \Gamma^{lb}(P_{i+1}) $. Using 
    condition of Case 2 ($\Gamma^{lb}(P_{i+1})\!\geq\!\Gamma^{lb}(P_{i})\!+\!1$) we have
    \begin{equation}
        |MaxSet(\mu')| \ge \Gamma^{lb}(P_{i+1}) \geq\!\Gamma^{lb}(P_{i})\!+\!1 \ge |S(P_i)| + 1
    \end{equation}   
    Therefore $S(P_i)$ has less elements then $MaxSet(\mu')$ and as far as $v$ is the smallest VM from $MaxSet(\mu')$ the  Eq. \ref{eq:app:theorem2c2_n3} is correct 
    because of Definition \ref{app:def_exeeds} and Eq. \ref{eq:app:theorem2c2_n2}.

    Adding $v$ and $v_{i+1}$ to Eq. \ref{eq:app:theorem2c2_n3} and using Property \ref{app:prop}.3 we have 
    \begin{equation}
    \label{eq:app:theorem2c2_n4}
       X \cup v \succeq S(P_{i}) \cup v_{i+1},
    \end{equation}     
    that is valid because VMs are sorted in descending order and $rv \ge rv_{i+1}$. Finally substituting Eq. \ref{eq:app:theorem2c2_n2} and definition of $S(P_{i+1}) = S(P_{i}) \cup v_{i+1}$ into
    Eq. \ref{eq:app:theorem2c2_n4} we have  
    \begin{equation}
        MaxSet(\mu) =  X \cup v  \succeq   S(P_{i}) \cup v_{i+1} =S(P_{i+1}).
    \end{equation}   
    Therefore we proved that $MaxSet(\mu) \succeq  S(P_{i+1})$

\end{enumerate}

\section{Computation of \texorpdfstring{$\mathbf{\Gamma}$}{Lg} and \texorpdfstring{$\mathbf{\widetilde{\Gamma}}$}{Lg}} 
\label{appendix: computation of Gamma and widetilde Gamma}

The number of VMs on host $N$, parameter $\Gamma$ and acceptable probability of constraint violation $\alpha$ are connected with the following expression $B(N,\Gamma) = \alpha$ (\cite{BERTSIMAS}), where:

\begin{equation}
\label{eq:B_appendix}
    B(N,\Gamma) = \frac{1}{2^N}\cdot \biggl\{ (1-\mu) {N\choose \left \lfloor \nu \right \rfloor} + \sum\limits_{l=\left \lfloor \nu \right \rfloor + 1)}^N {N\choose l} \biggr\}
\end{equation}
and
\begin{equation*}
    \mu = \nu - \left \lfloor \nu \right \rfloor, \;\; \nu = (\Gamma + N)/2.
\end{equation*}
In the work of Bertsimas \textit{et. al.} parameter $\Gamma$ takes real values, however in our scenario it is a natural number representing the number of VMs in $MaxSet_h$. To obtain $\Gamma(N)$ we numerically solved $B(N,\Gamma) = \alpha$ and rounded up the obtain values (see Figure \ref{fig:GammaExample}a). Also, for calculation of PrefixUB we required a concave approximation of the $\Gamma(N)$, which we obtained using linear programming technique described in Sec. \ref{sec:concave}.

\end{proof}

\section{Proof of Lemma \ref{lemma:radiuses seperation}}
\label{appendix: proof of radiuses seperation}

We aim to split $2 N$ VMs between two hosts and minimize probabilistic overhead. Lets sort VMs by radiuses $ur_1 \geq ... \geq ur_{2N}$.
First $\Gamma$ VMs always belong to $MaxSet=Maxset_1 \cup MaxSet_2$ in any mapping, whereas VMs from $N+\Gamma+1$ to $2N$ belong to $MinSet=MinSet_1 \cup Minset_2$ in any mapping (see Figure \ref{fig:proof_lemma1}). 
To minimize the probabilistic overhead, among the rest VMs from $\Gamma+1$ to $N+\Gamma$ the smallest VMs from $N+1$ to $N + \Gamma$ have to be in $MaxSet$ and the rest in $MinSet$. Therefore we constructed optimal $MinSet$ and $MaxSet$ and packing of the hosts using VMs order as stated in Lemma \ref{lemma:radiuses seperation} is achievable and optimal. 

\begin{figure}[ht]
  \centering
  \includegraphics[width=0.70\columnwidth]{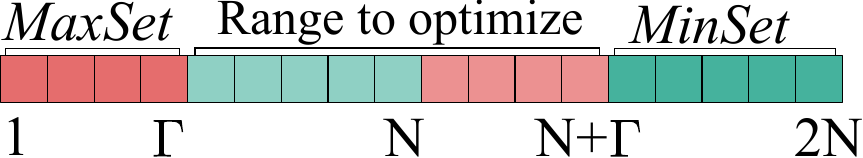}
  \caption{Illustration for the proof of Lemma \ref{lemma:radiuses seperation}. }
  \label{fig:proof_lemma1}
\end{figure}

\section{Proof of Lemma \ref{lemma:best radiuses values}}
\label{appendix: proof of best radiuses values}

\begin{proof}
    Since VMs from $MinSet$ are uninvolved in the calculation of probabilistic load of the host, all their radiuses must be equal to
    $ur_{\Gamma}$ in optimal solution; because otherwise $ur_{MinSet}$ is improvable by increase of radius from MinSet to $ur_{\Gamma}$. 
    Therefore our goal is to maximize $ur_{\Gamma}$, which is clearly achieved when
    all $ur_i$ in MaxSet are the same; because otherwise we can increase $ur_{\Gamma}$ at the cost of decrease of another VM from MaxSet by keeping capacity equal to C.
    In other words, the optimal solution is achieved when all VMs from $MaxSet$ and $MinSet$ are equal to $ur_{opt}=(C - \sum\nolimits_{v=1}^{N}{uc_v})/\Gamma_N$. 
\end{proof}